\documentclass[a4paper,UKenglish,cleveref, autoref, thm-restate]{lipics-v2019}
\hideLIPIcs
\usepackage{tikz}
\newtheorem*{proviso}{Proviso}

\bibliographystyle{plainurl}

\title{Faster Property Testers in a Variation of the Bounded Degree Model}


\newcommand{\BDRD}{$\operatorname{BDRD}$ }
\newcommand{\BDRDnew}{$\operatorname{BDRD}_{+/-}$ }
\newcommand{\h}{\operatorname{h}}


\author{Isolde Adler}{University of Leeds, School of Computing, Leeds, UK}{i.m.adler@leeds.ac.uk}{https://orcid.org/0000-0002-9667-9841}{}

\author{Polly Fahey}{University of Leeds, School of Computing, Leeds, UK}{mm11pf@leeds.ac.uk}{https://orcid.org/0000-0002-3781-5313}{}

\authorrunning{I. Adler and P. Fahey} 

\Copyright{Isolde Adler and Polly Fahey} 

\keywords{Constant Time Algorithms, Logic and Databases, Property Testing, Bounded Degree Model} 

\begin{CCSXML}
<ccs2012>
<concept>
<concept_id>10003752.10003809.10010055</concept_id>
<concept_desc>Theory of computation~Streaming, sublinear and near linear time algorithms</concept_desc>
<concept_significance>500</concept_significance>
</concept>
<concept>
<concept_id>10003752.10010070.10010111.10011711</concept_id>
<concept_desc>Theory of computation~Database query processing and optimization (theory)</concept_desc>
<concept_significance>500</concept_significance>
</concept>
</ccs2012>
\end{CCSXML}

\ccsdesc[500]{Theory of computation~Streaming, sublinear and near linear time algorithms}
\ccsdesc[500]{Theory of computation~Database query processing and optimization (theory)}

\category{} 

\relatedversion{} 

\supplement{}

\acknowledgements{}

\nolinenumbers 


\AtBeginDocument{%
  \providecommand\BibTeX{{%
    \normalfont B\kern-0.5em{\scshape i\kern-0.25em b}\kern-0.8em\TeX}}}


\EventEditors{Nitin Saxena and Sunil Simon}
\EventNoEds{2}
\EventLongTitle{40th IARCS Annual Conference on Foundations of Software Technology and Theoretical Computer Science (FSTTCS 2020)}
\EventShortTitle{FSTTCS 2020}
\EventAcronym{FSTTCS}
\EventYear{2020}
\EventDate{December 14--18, 2020}
\EventLocation{BITS Pilani, K K Birla Goa Campus, Goa, India (Virtual Conference)}
\EventLogo{}
\SeriesVolume{182}
\ArticleNo{11}

\begin{document}
\maketitle

\begin{abstract}
   Property testing algorithms are highly efficient algorithms, that 
	come with probabilistic accuracy guarantees. For a property $P$,
	the goal is to distinguish inputs that have $P$ from those that
	are \emph{far} from having $P$ with high probability correctly, by
	querying only a small number of local parts of the input. In property testing on graphs, the  \emph{distance} is measured by the number of edge modifications (additions or deletions), that are 
	necessary to transform a graph into one with property $P$.
	Much research has focussed on the \emph{query complexity} of such algorithms,
	i.\,e.\ the  
	number of queries the algorithm makes to the input, 
	but in view of applications, the 
	\emph{running time} of the algorithm is equally relevant. 
	
	In (Adler, Harwath STACS 2018),
	a natural extension of the bounded degree graph model of property testing to relational 
	databases of bounded degree was introduced, and it was shown that on databases of bounded
	degree and bounded tree-width, every property that is expressible in monadic second-order
	logic with counting (CMSO) is testable with constant query complexity and \emph{sublinear} running time.
	It remains open whether this can be improved to constant running time.

	In this paper we introduce a new model, which is based on the bounded degree model, but the
	distance measure allows both edge (tuple) modifications and vertex (element) 
	modifications.
	Our main theorem shows that on databases of bounded
	degree and bounded tree-width, every property that is expressible in CMSO 
	is testable with constant query complexity and \emph{constant} running time in the new model. 
	We also show that every property
	that is testable in the classical model is testable in our model
	with the same query complexity and running time, 
	but the converse is not true.

	We argue that our model is natural and our 
	meta-theorem showing constant-time CMSO testability supports this. 
\end{abstract}

\section{Introduction}
Extracting information from large amounts of data and understanding its global structure can be an immensely challenging and time consuming task. When the input data is huge, many traditionally `efficient' algorithms are no longer practical. The framework of property testing aims at addressing this problem. Property testing algorithms (\emph{testers}, for short) are given oracle access to the inputs, and their goal is to distinguish between inputs which have a given property $\mathbf P$ or are structurally \emph{far} from having $\mathbf P$ with high probability correctly. This can be seen as a relaxation of the classical yes/no decision problem for $\mathbf P$. Testers make these decisions by exploring only a small number of local parts of the input which are randomly chosen. They come with probabilistic guarantees on the quality of the answer. 
Typically, only a constant number of small local parts are explored and the algorithms often run in constant or sublinear time. This speed up in running time, whilst sacrificing some accuracy, can be crucial for dealing with large inputs. In particular it can be useful for a quick exploration of newly obtained data (e.\,g.\ biological networks). Based on the outcome of the exploration, a decision can then be taken whether to use a more time consuming exact algorithm in a second step.

A \emph{property} is simply an isomorphism-closed class of graphs or relational databases.
For example, each Boolean database query $q$ defines a property $\mathbf P_q$, the class of all databases satisfying $q$.
In the bounded degree graph model~\cite{goldreich2002property}, a uniform upper bound $d$ on the degree of the graphs is assumed.
For a small $\epsilon\in (0,1]$, two graphs $\mathcal{G}$ and $\mathcal{H}$, both on $n$ vertices, are $\epsilon$-\emph{close},
if at most $\epsilon dn$ edge modifications (deletions or insertions in $\mathcal{G}$ or $\mathcal{H}$) are necessary to make $\mathcal{G}$ and $\mathcal{H}$ isomorphic. 
If $\mathcal{G}$ and $\mathcal{H}$ are not $\epsilon$-\emph{close}, then they are called $\epsilon$-\emph{far}.
A graph $\mathcal{G}$ is called $\epsilon$-\emph{close} to a property $\mathbf P$, if $\mathcal{G}$ is $\epsilon$-\emph{close} to a member of $\mathbf P$, and $\mathcal{G}$ is $\epsilon$-\emph{far} from $\mathbf P$ otherwise.
The natural generalisation of this model to relational databases of bounded degree (where a database has degree at most $d$ if each element in its domain appears in at most $d$ tuples) was studied in~\cite{adler2018property}, where two 
databases $\mathcal D$ and $\mathcal D'$, both with $n$ elements in the domain, are 
$\epsilon$-\emph{close}, 
if at most $\epsilon dn$ tuple modifications (deletions from relations or insertions to relations) 
are necessary to make $\mathcal D$ and $\mathcal D'$ isomorphic, and $\mathcal D$ and $\mathcal D'$ are $\epsilon$-\emph{far} otherwise. We call this model for bounded degree relational databases the \BDRD model.

\textbf{Our contributions.} 
In this paper we propose a new model for property testing on bounded degree relational databases,
which we call the \BDRDnew model, with a distance measure that allows both tuple deletions and insertions, and \emph{deletion and insertion of elements of the domain}. On graphs, this translates to edge insertions and deletions, and \emph{vertex insertions and deletions}.
We argue that this yields a natural distance measure. Indeed, take  
any (sufficiently large) graph $\mathcal{G}$, and let $\mathcal{H}$ be obtained from $\mathcal{G}$ by adding an isolated vertex. Then
$\mathcal{G}$ and $\mathcal{H}$ are $\epsilon$-far for every $\epsilon \in (0,1]$ under the classical distance measure, although they only differ in one vertex. In contrast, our distance measure allows for a small number of vertex modifications. 
While comparing graphs on different numbers of vertices by adding isolated
vertices was done implicitly as part of the study the testability of outerplanar graphs~ \cite{babu2016every},
to the best of our knowledge, such a distance measure has not been considered before as part of a model in property testing, which seems surprising to us.

Formally, in the \BDRDnew model, two databases $\mathcal D$ and $\mathcal D'$ are 
$\epsilon$-\emph{close}, if they can be made isomorphic by at most $\epsilon dn$ \emph{modifications}, where a modification is either, (1) removing a tuple from a relation, (2) inserting a tuple to a relation, (3) removing an element from the domain (and, as a consequence, any tuple containing that element is removed), or (4) inserting an element into the domain. Here $n$ is the minimum of the sizes of the domains of $\mathcal D$ and $\mathcal D'$.
In Section~\ref{sec: the model} we give the full details of our model. We note that the \BDRDnew model differs from the \BDRD model only in the choice of the distance measure. 
While we work in the setting of relational databases, we would like to emphasize that our results
carry over to (undirected and directed) graphs, as these can be seen as special instances of relational 
databases.


%

It is known that in the bounded degree graph model, every minor-closed property is testable \cite{benjamini2010every}, and, more generally, every hyperfinite graph property is 
testable \cite{newman2013every} with constant query complexity. 
However, no bound on the running time can be obtained in these general settings. 
Indeed, there exist hyperfinite properties (of edgeless graphs) that are uncomputable.  
In~\cite{adler2018property},
Adler and Harwath ask which conditions guarantee both low query complexity \emph{and} efficient running time. They prove a meta-theorem stating that, on classes of databases (or graphs) of bounded degree and bounded tree-width, every property that can be expressed by a sentence of monadic second-order logic with counting (CMSO) is testable with \emph{constant} query complexity and \emph{polylogarithmic} running time in the \BDRD model. Treating many algorithmic problems simultaneously, this can be seen as an algorithmic \emph{meta-theorem} within the line of research inspired by Courcelle's famous theorem \cite{courcelle1990graph} that states that each property of relational databases which is definable in CMSO is decidable in linear time on relational databases of bounded tree-width. CMSO extends first-order logic (FO) and hence properties expressible in FO (e.g. subgraph/sub-database freeness) are also expressible in CMSO. Other examples of graph properties expressible in CMSO include
bipartiteness, colourability, even-hole-freeness and Hamiltonicity.
Rigidity (i.\,e.\ the absence of a non-trivial automorphism) cannot be expressed in CMSO (cf.~\cite{courcelle2012graph} for
more details).

Our main theorem (Theorem~\ref{thm: CMSO testability}) shows that in the \BDRDnew model, on classes of databases (or graphs) of bounded degree and bounded tree-width, every property that can be expressed by a sentence of monadic second-order logic with counting (CMSO) is testable with \emph{constant} query complexity and \emph{constant} running time.
The question whether constant running time can also be achieved in the \BDRD model
remains open.

We show that the \BDRDnew model is in fact stronger than the \BDRD model: Any property testable in the \BDRD model is also testable in the \BDRDnew model with the same query complexity and running time (Lemma \ref{lemma: comparing models 1}), but there are examples that show that the converse is not true (Lemma \ref{lemma: comparing models 2}).


In the future, it would be interesting to obtain a characterisation of the properties that are (efficiently) testable in the \BDRDnew model.

\textbf{Our techniques.}
To prove our main theorem, we give a general condition under which properties are testable in constant time in the \BDRDnew model whereas the fastest known testers for such properties in the \BDRD model run in polylogarithmic time. To describe this condition let us first briefly introduce some definitions. A property $\mathbf{P}$ is \emph{hyperfinite} on a class of databases $\mathbf{C}$ if every database in $\mathbf{P}$ can be partitioned into connected components of constant size by removing only a constant fraction of the tuples such that the resulting partitioned database is in $\mathbf{C}$. Let $r \in \mathbb{N}$, given an element $a$ in the domain of a database $\mathcal{D}$ the \emph{$r$-neighbourhood type} of $a$ in $\mathcal{D}$ is the isomorphism type of the sub-database of $\mathcal{D}$ induced by all elements that are at distance at most $r$ from $a$ in the underlying graph of $\mathcal{D}$, expanded by $a$. The \emph{$r$-histogram} of a bounded degree database $\mathcal{D}$, denoted by $\h_r(\mathcal{D})$, is a vector indexed by the $r$-neighbourhood types, where the component corresponding to the $r$-neighbourhood type $\tau$ contains the number of elements in $\mathcal{D}$ that realise $\tau$. The \emph{$r$-neighbourhood distribution} of $\mathcal{D}$ is the vector $\h_r(\mathcal{D})/n$ where $\mathcal{D}$ is on $n$ elements.
We show that for any property $\mathbf{P}$ and input class $\mathbf{C}$, if $\mathbf{P}$ is hyperfinite on $\mathbf{C}$ and the set of $r$-histograms of the databases in $\mathbf{P}$ are semilinear, then $\mathbf{P}$ is testable on $\mathbf{C}$ in constant time (Theorem \ref{thm: constant time tester}). 
As a corollary we then obtain our main theorem, that every property definable by a CMSO sentence is testable on the class of databases with bounded degree and bounded tree-width in constant time (Theorem~\ref{thm: CMSO testability}).

Alon \cite[Proposition 19.10]{lovasz2012large} proved that for every bounded degree graph $\mathcal{G}$ there exists a constant size graph $\mathcal{H}$ that has a similar neighbourhood distribution to $\mathcal{G}$. However, the proof is based on a compactness argument and does not give an explicit upper bound on the size of $\mathcal{H}$. Finding such a bound was suggested by Alon as an open problem \cite{indyk2011open}.
 We ask under which conditions on a given property $\mathbf P$, for every member of $\mathbf P$ there exists a constant size database with a similar neighbourhood distribution which is also in $\mathbf P$. We show that for any property $\mathbf{P}$ which is hyperfinite on the input class $\mathbf{C}$ and whose $r$-histograms are semilinear, if a database $\mathcal{D}$ is in $\mathbf{P}$ then there exists a constant size database $\mathcal{D'}$ in $\mathbf{P}$ with a similar neighbourhood distribution but this is not true for databases in $\mathbf{C}$ that are far from $\mathbf{P}$. Furthermore, we obtain upper and lower bounds on the size of $\mathcal{D'}$. We can then use this result to construct constant time testers. We first use the algorithm $\operatorname{EstimateFrequencies}_{r,s}$ (given in \cite{newman2013every} and adapted to databases in \cite{adler2018property}) to approximate the neighbourhood distribution of the input database. Then we only have to check if the estimated distribution is close to the neighbourhood distribution of a constant size database in the property.

As a corollary (Corollary~\ref{cor:alon}), we obtain an explicit bound on the size on graphs $\mathcal{H}$ from Alon's
theorem for `semilinear' properties, i.\,e.\ properties, 
where the histogram vectors of the neighbourhood distributions form a semilinear set.

\subparagraph*{Further related work.}
Other than the work already mentioned in \cite{adler2018property} there are only a handful of results on relational databases that utilise models from property testing. Chen and Yoshida \cite{chen2019testability} study a model which is close to the general graph model (cf.\ e.\,g. \cite{alon2008testing}) in which they study the testability of homomorphism inadmissibility. Ben-Moshe et al. \cite{ben2011detecting} study the testability of near-sortedness (a property of relations that states that most tuples are close to their place in some desired order). 
Our model differs from both of these, as it relies on a degree bound and uses different types of oracle access.
Explicit bounds for Alon's theorem restricted to high-girth graphs were given in~\cite{FichtenbergerPS15}.

Obtaining a characterisation of constant query testable properties is a long-standing open problem.
Ito et al.~\cite{ito2020characterization} give a characterisation of the 1-sided error constant query testable monotone and hereditary graph properties in the bounded degree (directed and undirected) graph model. Fichtenberger et al.~\cite{fichtenberger2019every} show that every constant query testable property in the bounded degree graph model is either finite or contains an infinite hyperfinite subproperty.

\subparagraph*{Organisation.}
In Section \ref{sec: prelims} we introduce relevant notions used throughout the paper. In Section \ref{sec: the model} we introduce our property testing model for bounded degree relational databases and we compare it to the classical model. In Section \ref{sec: main results} we prove our main theorems.  Due to space constraints the proofs of statements labelled $(\ast)$ are deferred to the appendix.

\section{Preliminaries}\label{sec: prelims}
We let $\mathbb{N}$ be the set of natural numbers including $0$, and $\mathbb{N}_{\geq 1} = \mathbb{N} \setminus \{0\}$. For each $n \in \mathbb{N}_{\geq 1}$, we let $[n] = \{1,2,\dots,n\}$.

\subparagraph*{Databases.}
A \emph{schema} is a finite set $\sigma = \{R_1,\dots,R_{|\sigma|}\}$ of relation names, 
where each $R\in \sigma$ has an \emph{arity} ar$(R) \in \mathbb{N}_{\geq 1}$. 
A \emph{database} $\mathcal{D}$ of schema $\sigma$ ($\sigma$-db for short) is of the form
$\mathcal{D} = (D, R_1^{\mathcal{D}}, \dots, R_{|\sigma|}^{\mathcal{D}})$, where $D$ is a finite set, the set
 of \emph{elements} of $\mathcal{D}$, and $R_i^{\mathcal{D}}$ is an ar$(R_i)$-ary relation on $D$.
 The set $D$ is also called the \emph{domain} of $\mathcal{D}$. An \emph{(undirected) graph} $\mathcal{G}$ is a tuple $\mathcal{G} =(V(\mathcal{G}),E(\mathcal{G}))$ where $V(\mathcal{G})$ is a set of \emph{vertices} and $E(\mathcal{G})$ is a set of $2$-element subsets of $V(\mathcal{G})$ (the \emph{edges} of $\mathcal G$). 
An undirected graph can be seen as a $\{E\}$-db, where $E$ is a binary relation name, interpreted by a symmetric, irreflexive relation.

We assume that all databases are linearly ordered or, equivalently, that $D=[n]$ for some $n\in \mathbb N$ (similar to \cite{KazanaS11}). We extend this linear ordering to a linear order on the relations of $\mathcal{D}$ via lexicographic ordering.
The \emph{Gaifman graph} of a $\sigma$-db $\mathcal D$ is the undirected graph 
$\mathcal{G}(\mathcal{D})=(V,E)$, 
with vertex set $V:=D$ and an edge between vertices $a$ and $b$ whenever $a\neq b$ and there is an 
$R\in \sigma$ and a 
tuple $(a_1,\ldots,a_{\text{ar}(R)})\in R^{\mathcal D}$ with $a,b\in\{a_1,\ldots,a_{\text{ar}(R)}\}$.
The \emph{degree} deg$(a)$ of an element $a$ in a database $\mathcal{D}$ is the total number of tuples in all relations of $\mathcal D$ that contain $a$. We say the \emph{degree} deg$(\mathcal{D})$ of a database $\mathcal{D}$ is the maximum degree of its elements. A class of databases $\mathbf{C}$ has \emph{bounded degree}, if there exists a constant $d\in\mathbb N$ such that for all $\mathcal{D} \in \mathbf{C}$, deg$(\mathcal{D}) \leq d$. (We always assume that classes of databases are closed under isomorphism.) Let us remark that the $\deg(\mathcal{D})$ and the (graph-theoretic) degree of $\mathcal{G}(\mathcal{D})$ only differ by at most a constant factor (cf.\ e.\,g.~\cite{durand2007first}). Hence both measures yield the same classes of relational structures of bounded degree.
We define the \emph{tree-width} of a database $\mathcal D$ 
as the the tree-width of its Gaifman graph. (See e.\,g.\ \cite{Flum:2006:PCT:1121738} for a discussion of tree-width in this context.)
A class $\mathbf{C}$ of databases has \emph{bounded tree-width}, if there exists a constant $t\in \mathbb N$ such that all databases $\mathcal{D} \in \mathbf{C}$ have tree-width at most~$t$. 
Let $\mathcal D$ be a $\sigma$-db, and $M\subseteq D$. The sub-database of $\mathcal{D}$ \emph{induced by} $M$ is the database $\mathcal{D}[M]$ with domain $M$ and $R^{\mathcal{D}[M]}:=R^{\mathcal{D}}\cap M^{\text{ar}(R)}$ for every $R\in \sigma$. An \emph{$(\epsilon, k)$-partition} of a $\sigma$-db $\mathcal{D}$ on $n$ elements is a $\sigma$-db $\mathcal{D'}$ formed by removing at most $\epsilon n$ many tuples from $\mathcal{D}$ such that every connected component in $\mathcal{D'}$ contains at most $k$ elements. A class of $\sigma$-dbs $\mathbf{C} \subseteq \mathbf{D}$ is \emph{$\rho$-hyperfinite} on $\mathbf{D}$ if for every $\epsilon \in (0,1]$ and $\mathcal{D} \in \mathbf{C}$ there exists an $(\epsilon, \rho(\epsilon))$-partition $\mathcal{D'} \in \mathbf{D}$ of $\mathcal{D}$. We call $\mathbf{C}$ \emph{hyperfinite} on $\mathbf{D}$ if there exists a function $\rho$ such that $\mathbf{C}$ is $\rho$-hyperfinite on $\mathbf{D}$.

\subparagraph*{Logics.}
We shall only briefly introduce first-order logic (FO) and monadic second-order logic with counting (CMSO). Detailed introductions can be found in \cite{libkin2013elements} and \cite{courcelle2012graph}. Let \textbf{var} be a countable infinite set of \emph{variables}, and fix a relational schema $\sigma$. 
The set $\operatorname{FO}[\sigma]$ is built from \emph{atomic formulas} of the form $x_1=x_2$ or $R(x_1, \dots, x_{\textup{ar}(R)})$, where $R \in \sigma$ and $x_1,\dots,x_{\textup{ar}(R)} \in \textbf{var}$, and is closed under Boolean connectives ($\lnot, \lor,\land,\rightarrow, \leftrightarrow$) 
and existential and universal quantifications ($\exists, \forall$). 
\emph{Monadic second-order logic} (MSO) is the extension of first-order logic that also allows quantification over subsets of the domain. CMSO extends MSO by allowing first-order modular counting quantifiers $\exists^m$ for every integer $m$ (where $\exists^m \phi$ is true in a $\sigma$-db if the number of its elements for which $\phi$ is satisfied is divisible by $m$). A \emph{free variable} of a formula is a (individual or set) variable that does not appear in the scope of a quantifier. A formula without free variables is called a \emph{sentence}. For a $\sigma$-db $\mathcal{D}$ and a sentence $\phi$ we write $\mathcal{D} \models \phi$ to denote that $\mathcal{D}$ satisfies $\phi$.

\begin{proviso}
For the rest of the paper, we fix a schema $\sigma$ and numbers $d,t \in \mathbb{N}$ with $d \geq 2$. From now on, all databases are $\sigma$-dbs and have degree at most $d$, unless stated otherwise. We use $\mathbf{C}_d$ to denote the class of all $\sigma$-dbs with degree at most $d$, $\mathbf{C}_d^t$ to denote the class of all $\sigma$-dbs with degree at most $d$ and tree-width at most $t$ and finally we use $\mathbf{C}$ to denote a class of $\sigma$-dbs with degree at most $d$.
\end{proviso}

\subparagraph*{Property testing.}
Adler and Harwath~\cite{adler2018property} introduced the model of property testing for bounded degree relational databases, which is a straightforward extension of the model for bounded degree graphs~\cite{goldreich2002property}. We call this model the \emph{\BDRD model} for short, which we shall discuss below.

Property testing algorithms do not have access to the whole input database. Instead, they are given access via an \emph{oracle}. Let $\mathcal{D}$ be an input $\sigma$-db on $n$ elements. A property testing algorithm receives the number $n$ as input, and it can make \emph{oracle queries}\footnote{Note that an oracle query is not a database query.} of the form $(R,i,j)$, where $R \in \sigma$, $i \leq n$ and $j \leq \text{deg}(\mathcal{D})$. The answer to $(R,i,j)$ is the $j^{\text{th}}$ tuple in $R^{\mathcal{D}}$ containing the $i^{\text{th}}$ element\footnote{According to the assumed linear order on $D$.} of $\mathcal{D}$ (if such a tuple does not exist then it returns $\bot$). We assume oracle queries are answered in constant time. 

Let $\mathcal{D},\mathcal{D'}$ be two $\sigma$-dbs, both having $n$ elements. In the \BDRD model the \emph{distance} between $\mathcal{D}$ and $\mathcal{D'}$, denoted by dist$(\mathcal{D}, \mathcal{D'})$, is the minimum number of tuples that have to be inserted or removed from relations of $\mathcal{D}$ and $\mathcal{D'}$ to make $\mathcal{D}$ and $\mathcal{D'}$ isomorphic. For $\epsilon \in [0,1]$, we say $\mathcal{D}$ and $\mathcal{D'}$ are \emph{$\epsilon$-close} if dist$(\mathcal{D}, \mathcal{D'}) \leq \epsilon d n$, and $\mathcal{D}$ and $\mathcal{D'}$ are \emph{$\epsilon$-far} otherwise. A \emph{property} is simply an isomorphism-closed class of databases. Note that every CMSO  sentence $\phi$ defines a property $\mathbf{P}_{\phi}=\{\mathcal D\mid \mathcal D \models \phi\}$. We call $\mathbf{P}_{\phi}\cap \mathbf{C}$ the property \emph{defined by $\phi$ on $\mathbf{C}$}.
A $\sigma$-db $\mathcal{D}$ is \emph{$\epsilon$-close} to a property $\mathbf{P}$ if there exists a database $\mathcal{D'} \in \mathbf{P}$ that is $\epsilon$-close to $\mathcal{D}$, otherwise $\mathcal{D}$ is \emph{$\epsilon$-far} from $\mathbf{P}$.

Let $\mathbf{P} \subseteq \mathbf{C}$ be a property and $\epsilon \in (0,1]$ be the proximity parameter. An \emph{$\epsilon$-tester} for $\mathbf{P}$ on $\mathbf{C}$ is a probabilistic algorithm which is given oracle access to a $\sigma$-db $\mathcal{D} \in \mathbf{C}$ and it is given $n:=|D|$ as auxiliary input. The algorithm does the following:
\begin{enumerate}
\item If $\mathcal{D} \in \mathbf{P}$, then the tester accepts with probability at least ${2}/{3}$.
\item If $\mathcal{D}$ is $\epsilon$-far from $\mathbf{P}$, then the tester rejects with probability at least ${2}/{3}$.
\end{enumerate}
The \emph{query complexity} of a tester is the maximum number of oracle queries made. 
A tester has \emph{constant} query complexity, if the query complexity does not depend on
the size of the input database.
We say a property $\mathbf{P} \subseteq \mathbf{C}$ is \emph{uniformly testable} in time $f(n)$ on $\mathbf{C}$, if for every $\epsilon \in (0,1]$ there exists an $\epsilon$-tester for $\mathbf{P}$ on $\mathbf{C}$ which has constant query complexity and whose running time on databases on $n$ elements is $f(n)$. Note that this tester must work for all $n$.


\subparagraph*{Neighbourhoods.}
For a $\sigma$-db $\mathcal D$ and $a,b \in D$, the \emph{distance} between $a$ and $b$ in $\mathcal D$, denoted by dist$_{\mathcal D}(a,b)$, is the length of a shortest path between $a$ and $b$ in $\mathcal{G}(\mathcal{D})$. Let $r \in \mathbb{N}$. 
For an element $a\in D$, we let $N^{\mathcal D}_r(a)$ denote the set of all elements of $\mathcal{D}$ that are at distance at most $r$ from $a$. The \emph{$r$-neighbourhood} of $a$ in $\mathcal D$, denoted by $\mathcal{N}^{\mathcal D}_r(a)$, is the tuple $(\mathcal{D}[N_r(a)], a)$ where $a$ is called the \emph{centre}. We omit the superscript and
write $N_r(a)$ and $\mathcal{N}_r(a)$, if $\mathcal D$ is clear from the context.
Two $r$-neighbourhoods, $\mathcal{N}_r(a)$ and $\mathcal{N}_r(b)$, are \emph{isomorphic} (written $\mathcal{N}_r(a) \cong  \mathcal{N}_r(b)$) if there is an isomorphism between $\mathcal{D}[N_r(a)]$ and $\mathcal{D}[N_r(b)]$ which maps $a$ to $b$. An $\cong$-equivalence-class of $r$-neighbourhoods is called an \emph{$r$-neighbourhood type} (or \emph{$r$-type} for short). We let $T_{r}^{\sigma, d}$ denote the set of all $r$-types with degree at most $d$, over schema $\sigma$. Note that for fixed $d$ and $\sigma$, the cardinality 
$|T_{r}^{\sigma, d}|=:\operatorname{c}(r)$ is a constant, only depending on $r$ and $d$. 
We say that an element $a\in D$ \emph{has $r$-type $\tau$}, if $\mathcal{N}_r^{\mathcal D}(a) \in \tau$. 
For $r\in \mathbb N$, the \emph{$r$-histogram} of a database $\mathcal{D}$,
denoted by $\operatorname{h}_r(\mathcal{D})$, is the vector with
$\operatorname{c}(r)$ components, indexed by the $r$-types, where the
component corresponding to type $\tau$ contains the number of elements of
$\mathcal{D}$ of $r$-type $\tau$. The \emph{$r$-neighbourhood distribution} of $\mathcal{D}$, denoted by $\operatorname{dv}_r(\mathcal{D})$, is the vector $\operatorname{h}_r(\mathcal{D})/n$ where $|D|=n$. 
For a class of $\sigma$-dbs $\mathbf{C}$ and $r \in \mathbb{N}$, we let $\operatorname{h}_r(\mathbf{C}) := \{\operatorname{h}_r(\mathcal{D}) \mid \mathcal{D} \in \mathbf{C} \}$. A set is \emph{semilinear} if it is a finite union of linear sets. A set $M \subseteq \mathbb{N}^c$ is linear if $M = \{\bar{v}_0 + a_1 \bar{v}_1 + \dots + a_k \bar{v}_k \mid a_1 ,\dots, a_k \in \mathbb{N}\}$, for some $\bar{v}_0 ,\dots, \bar{v}_k \in \mathbb{N}^c$. From a result in \cite{fischer2004spectra} about many-sorted spectra of CMSO sentences it can be derived that that the set of $r$-histograms of properties defined by a CMSO sentence on $\mathbf{C}_d^t$ are semilinear.

\begin{lemma}[\cite{adler2018property,fischer2004spectra}]\label{lemma:CMSO semilinear}
For each $r \in \mathbb{N}$ and each property $\mathbf{P} \subseteq \mathbf{C}_d^t$ definable by a CMSO sentence on $\mathbf{C}_d^t$, the set $\operatorname{h}_r(\mathbf{P})$ is semilinear.
\end{lemma}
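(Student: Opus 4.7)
The plan is to reduce the statement to the semilinearity result of Fischer and Makowsky on many-sorted spectra of CMSO sentences on structures of bounded tree-width, by encoding ``having $r$-type $\tau$'' as a unary predicate definable in the logic.

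First, I would observe that for each $r$-type $\tau\in T_r^{\sigma,d}$ there is an FO (hence CMSO) formula $\varphi_\tau(x)$ with one free variable such that, for every $\mathcal{D}\in \mathbf{C}_d$ and every $a\in D$, $\mathcal{D}\models \varphi_\tau(a)$ if and only if $\mathcal{N}^{\mathcal{D}}_r(a)\in \tau$. The degree bound gives $|N_r(a)|\le 1+d+\dots+d^r=:N$, so one can existentially quantify a tuple of at most $N$ candidate vertices, describe the required isomorphism type on the induced sub-database by a Boolean combination of atomic formulas, and insist (again by a bounded quantification over radius-$r$ walks) that no further element lies within distance $r$ of $a$. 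Since $|T_r^{\sigma,d}|=\operatorname{c}(r)$ is finite, we obtain one such formula per type.

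Second, given a CMSO sentence $\psi$ defining $\mathbf{P}$ on $\mathbf{C}_d^t$, I would expand the schema $\sigma$ by fresh unary predicate symbols $U_{\tau_1},\dots,U_{\tau_c}$, one per $r$-type, and form the CMSO sentence
\[
\psi^{\ast} \;:=\; \psi \;\wedge\; \bigwedge_{i=1}^{c}\forall x\,\bigl(U_{\tau_i}(x)\leftrightarrow \varphi_{\tau_i}(x)\bigr).
\]
Every database $\mathcal{D}\in\mathbf{P}$ has a unique expansion $\mathcal{D}^{\ast}$ satisfying $\psi^{\ast}$, in which $U_{\tau_i}^{\mathcal{D}^{\ast}}$ is exactly the set of elements of $r$-type $\tau_i$. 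The expansion only adds unary relations, so $\mathcal{D}^{\ast}$ still has degree at most $d$ and tree-width at most $t$. By construction, the $r$-histogram of $\mathcal{D}$ is precisely the vector $\bigl(|U_{\tau_1}^{\mathcal{D}^{\ast}}|,\ldots,|U_{\tau_c}^{\mathcal{D}^{\ast}}|\bigr)\in\mathbb{N}^c$.

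Third, I would invoke the Fischer--Makowsky theorem on many-sorted CMSO spectra: the set
\[
\bigl\{\bigl(|U_{\tau_1}^{\mathcal{D}^{\ast}}|,\ldots,|U_{\tau_c}^{\mathcal{D}^{\ast}}|\bigr)\;\big|\;\mathcal{D}^{\ast}\models\psi^{\ast},\ \operatorname{tw}(\mathcal{D}^{\ast})\le t\bigr\}
\]
is a semilinear subset of $\mathbb{N}^c$. By the previous step this set coincides with $\operatorname{h}_r(\mathbf{P})$, which settles the lemma.

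The main obstacle is the last step: Fischer and Makowsky's theorem, in the full generality of CMSO, is proved via the encoding of bounded-tree-width structures as labelled trees and the translation of CMSO into tree automata, combined with a Parikh-type closure argument that turns the regular tree language of satisfying trees into a semilinear Parikh image. One has to check that the bounded tree-width assumption is indeed preserved under the unary expansion (which it plainly is) and that the many-sorted spectrum is computed over exactly the relevant class of expanded structures, so that the resulting semilinear set describes $\operatorname{h}_r(\mathbf{P})$ and not a strictly larger set. Once the unary predicate reformulation above is in place, the application of their theorem is essentially direct.
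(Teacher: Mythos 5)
Your proposal is correct and follows essentially the same route the paper intends: the lemma is quoted from Adler--Harwath and Fischer--Makowsky, and the derivation there is exactly your reduction, namely that $r$-types are FO-definable on bounded-degree structures, so one can mark them with fresh unary predicates and read off $\operatorname{h}_r(\mathbf{P})$ as a many-sorted CMSO spectrum over bounded tree-width, which is semilinear. The only point worth making explicit is that membership in $\mathbf{C}_d^t$ must be enforced on the expanded structures, which works because the degree bound $d$ is itself FO-definable for a fixed schema and the tree-width bound is the hypothesis of the Fischer--Makowsky theorem.
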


\subparagraph*{Model of computation.}
We use Random Access Machines (RAMs) and a uniform cost measure when analysing our algorithms, i.\,e.\ we assume all basic arithmetic operations including random sampling can be done in constant time, regardless of the size of the numbers involved. 

\section{The Model}\label{sec: the model}

We shall now introduce our property testing model for bounded degree relational databases, which is an extension of the \BDRD model discussed in Section \ref{sec: prelims}. The notions of oracle queries, properties, $\epsilon$-tester, query complexity and uniform testability remain the same but we have an alternative definition of distance and $\epsilon$-closeness. In our model, which we shall call the \emph{\BDRDnew model} for short, we can add and remove elements as well as tuples and can therefore compare databases that are on a different number of elements.

\begin{definition}[Distance and $\epsilon$-closeness]
	Let $\mathcal{D}, \mathcal{D'} \in \mathbf{C}_d$ and $\epsilon \in [0,1]$. The distance between $\mathcal{D}$ and $\mathcal{D'}$ (denoted by $\operatorname{dist}_{+/-}(\mathcal{D}, \mathcal{D'})$) is the minimum number of modifications we need to make to $\mathcal{D}$ and $\mathcal{D'}$ to make them isomorphic where a modification is either (1) inserting a new element, (2) deleting an element (and as a result deleting any tuple that contains that element), (3) inserting a tuple, or (4) deleting a tuple. We then say $\mathcal{D}$ and $\mathcal{D'}$ are $\epsilon$-close if $\operatorname{dist}_{+/-}(\mathcal{D}, \mathcal{D'}) \leq \epsilon d \operatorname{min}\{|D|,|D'|\}$ and are $\epsilon$-far otherwise.
\end{definition}
The following example illustrates the difference between the distance measure of the \BDRD and the distance measure of the \BDRDnew model.

\begin{example}\label{example: model comparison}

Let $\mathbf{P}=\{\mathcal{G}_{n,m} \mid n,m \in \mathbb{N}_{>1}\}$ where $\mathcal{G}_{n,m}$ is an $n$ by $m$ grid graph as shown in Figure \ref{fig: grids example}. Let us consider the graph $\mathcal{H}_{n,m}$ for some $n,m \in \mathbb{N}$ which is formed from $\mathcal{G}_{n,m}$ by removing a corner vertex. In the \BDRDnew model the distance between $\mathcal{H}_{n,m}$ and $\mathcal{G}_{n,m}$ is 1 (we remove a corner vertex from $\mathcal{G}_{n,m}$ to get $\mathcal{H}_{n,m}$) and therefore $\mathcal{H}_{n,m}$ is at distance 1 from $\mathbf{P}$ in the \BDRDnew model. In the \BDRD model if two graphs are on a different number of vertices then the distance between them is infinity. Therefore if $nm-1$ is a prime number then $\mathcal{H}_{n,m}$ is at distance infinity from $\mathbf{P}$ in the \BDRD model.

%
%
%
\begin{figure}
\begin{center}
\begin{tikzpicture}
[scale=.6,auto=left,every node/.style={circle,fill=black!,scale=.6}]

\def\maxX{5}
\def\maxY{5}

\foreach \x  in {0,...,5}{
\foreach \y in {0,...,5}{
\ifthenelse{\x = 3 \OR \y=2}
{}
{\node at (\x,\y) {};};

\ifthenelse{\x=2 \AND \NOT \y =2}{
\draw[loosely dotted, line width=1pt] (\x,\y) -- (\x+2,\y);}
{};

\ifthenelse{\y=3 \AND \NOT \x =3}{
\draw[loosely dotted, line width=1pt] (\x,\y) -- (\x,\y -2);}
{};

\ifthenelse{ \x=5 \OR  \x=2 \OR \x=3 \OR \y=2}{}{
\draw[line width=.5pt] (\x,\y) -- (\x +1,\y);};

\ifthenelse{ \y=5 \OR  \y=1 \OR \y=2 \OR \x=3}{}{
\draw[line width=.5pt] (\x,\y) -- (\x,\y +1);};

}}

\draw [<->, line width=.7pt] ( -.75,0) -- (-.75,5);
    \node[style={fill=none, scale=1.7},rotate=90] at ( -1.1,2.5) {$n$};
    \draw [<->, line width=.7pt] (0 ,5.75) -- (5 ,5.75);
    \node[style={fill=none, scale=1.7}] at (2.5 ,6) {$m$};
   
\foreach \x  in {8,...,13}{
\foreach \y in {0,...,5}{
\ifthenelse{\x = 11 \OR \y=2 \OR \(\x=13 \AND \y=0\)}
{}
{\node at (\x,\y) {};};

\ifthenelse{\x=10 \AND \NOT \y =2}{
\draw[loosely dotted, line width=1pt] (\x,\y) -- (\x+2,\y);}
{};

\ifthenelse{\y=3 \AND \NOT \x =11}{
\draw[loosely dotted, line width=1pt] (\x,\y) -- (\x,\y -2);}
{};

\ifthenelse{ \x=13 \OR  \x=10 \OR \x=11 \OR \y=2 \OR \(\x=12 \AND \y=0\)}{}{
\draw[line width=.5pt] (\x,\y) -- (\x +1,\y);};

\ifthenelse{ \y=5 \OR  \y=1 \OR \y=2 \OR \x=11 \OR \(\x=13 \AND \y=0\)}{}{
\draw[line width=.5pt] (\x,\y) -- (\x,\y +1);};

}}

\draw [<->, line width=.7pt] ( 7.25,0) -- (7.25,5);
    \node[style={fill=none, scale=1.7},rotate=90] at ( 6.9,2.5) {$n$};
    \draw [<->, line width=.7pt] (8 ,5.75) -- (13 ,5.75);
    \node[style={fill=none, scale=1.7}] at (10.5 ,6) {$m$};

\end{tikzpicture}
\end{center}
\caption{The graphs $\mathcal{G}_{n,m}$ and $\mathcal{H}_{n,m}$ (respectively) of Example \ref{example: model comparison}.}
\label{fig: grids example}
\end{figure}
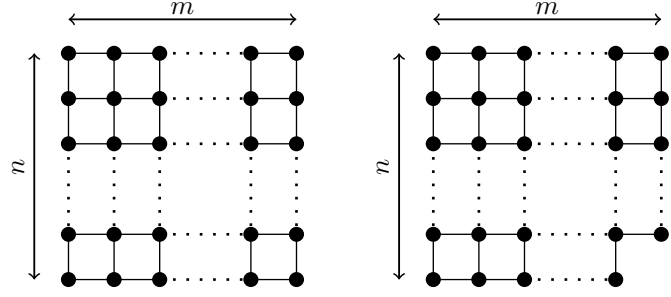
\end{example}

We now show that if a property is testable in the \BDRD model then it is also testable in the \BDRDnew model but the converse is not true. This allows for more testable properties in the \BDRDnew model.

\begin{lemma}[$\ast$]\label{lemma: comparing models 1} Let $\mathbf{P} \subseteq \mathbf{C}$. If $\mathbf{P}$ is uniformly testable on $\mathbf{C}$ in time $f(n)$ in the \BDRD model then $\mathbf{P}$ is also uniformly testable on $\mathbf{C}$ in time $f(n)$ in the \BDRDnew model.
\end{lemma}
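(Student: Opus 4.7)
The plan is to show that the same algorithm that $\epsilon$-tests $\mathbf{P}$ in the \BDRD model, unchanged, is an $\epsilon$-tester for $\mathbf{P}$ in the \BDRDnew model. Concretely, let $T$ be a uniform $\epsilon$-tester for $\mathbf{P}$ on $\mathbf{C}$ with query complexity and running time $f(n)$ in the \BDRD model, and propose $T$ itself as the \BDRDnew-tester. Since oracle access, query complexity and running time are defined identically in both models, it suffices to verify the two probabilistic guarantees for $T$ when $\epsilon$-closeness is interpreted in the \BDRDnew sense.

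The completeness condition is immediate: the definition of $\mathbf{P}$ and the promise ``$\mathcal D\in\mathbf P$'' do not depend on the distance measure, so $T$ still accepts any $\mathcal D\in\mathbf P$ with probability at least $2/3$. The only thing that needs an argument is soundness. For this, the key step is the comparison
\[
 \operatorname{dist}_{+/-}(\mathcal D,\mathcal D')\;\leq\;\operatorname{dist}(\mathcal D,\mathcal D')
\]
whenever $\mathcal D$ and $\mathcal D'$ have the same number $n$ of elements: every tuple insertion/deletion counted by $\operatorname{dist}$ is also a legal modification of type (3) or (4) in $\operatorname{dist}_{+/-}$, and with $|D|=|D'|=n$ the normalising factor on both sides is $\epsilon d n$. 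Hence every database that is $\epsilon$-close to $\mathbf{P}$ in the \BDRD model (witnessed by some $\mathcal D'\in\mathbf{P}$ of the same size) is also $\epsilon$-close to $\mathbf{P}$ in the \BDRDnew model (with the same witness $\mathcal D'$).

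Taking the contrapositive, any $\mathcal D\in\mathbf{C}$ that is $\epsilon$-far from $\mathbf{P}$ in the \BDRDnew model is in particular $\epsilon$-far from $\mathbf{P}$ in the \BDRD model. Therefore $T$ rejects such $\mathcal D$ with probability at least $2/3$, which is exactly the \BDRDnew soundness requirement. Combining the two cases, $T$ is a uniform $\epsilon$-tester for $\mathbf{P}$ on $\mathbf{C}$ in the \BDRDnew model with query complexity and running time $f(n)$, completing the proof.

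\textbf{Main obstacle.} There is essentially no technical obstacle: the lemma is a soft comparison of two distance measures. The only point one has to be careful about is that \BDRD-closeness is defined only between databases with the same domain size; a priori this could make the \BDRD distance \emph{larger} (indeed infinite) on some pairs, but this is exactly what we need, since it only makes the class of $\epsilon$-far inputs \emph{bigger} in the \BDRD model, so soundness transfers in the desired direction.
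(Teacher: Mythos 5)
Your proposal is correct and follows essentially the same route as the paper: run the \BDRD tester unchanged, note completeness is unaffected, and for soundness use that $\operatorname{dist}_{+/-}(\mathcal D,\mathcal D')\leq\operatorname{dist}(\mathcal D,\mathcal D')$ so that $\epsilon$-far in the \BDRDnew model implies $\epsilon$-far in the \BDRD model. In fact you spell out the key inequality that the paper's appendix proof only asserts implicitly.
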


\begin{theorem}[\cite{goldreich2002property}]\label{thrm: bipartite tester}
	In the bounded degree model, bipartiteness cannot be tested with query complexity 
	$o(\sqrt{n})$, where $n$ is the number of vertices of the input graph.
\end{theorem}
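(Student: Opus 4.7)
The plan is to apply Yao's minimax principle: it suffices to exhibit two distributions on $n$-vertex $3$-regular graphs, $\mathcal{D}_{\text{yes}}$ supported on bipartite graphs and $\mathcal{D}_{\text{no}}$ supported (with high constant probability) on graphs that are $\epsilon$-far from bipartite for some fixed $\epsilon>0$, such that no deterministic algorithm making $q=o(\sqrt{n})$ oracle queries can distinguish a sample from $\mathcal{D}_{\text{yes}}$ from one of $\mathcal{D}_{\text{no}}$ with error probability strictly smaller than $1/3$. Since every bounded-degree tester with error $1/3$ yields such a distinguisher after random padding to $3$-regularity, this will give the desired $\Omega(\sqrt{n})$ lower bound.

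For the construction I would take $\mathcal{D}_{\text{yes}}$ to be the uniform distribution over $3$-regular bipartite graphs on $n$ vertices (say, a union of three uniformly random perfect matchings between two sides of size $n/2$), and $\mathcal{D}_{\text{no}}$ to be the uniform distribution over $3$-regular graphs on $n$ vertices (e.g.\ via the configuration model, conditioning on simplicity). The first non-trivial step is to show that a sample from $\mathcal{D}_{\text{no}}$ is $\epsilon$-far from bipartite with probability bounded away from $0$ for some absolute constant $\epsilon>0$. This follows from the standard fact that random $3$-regular graphs are good expanders, so the size of a maximum cut is at most $(3/2-\delta)n/2$ for some $\delta>0$ with high probability; hence at least $\Omega(n)$ edges must be removed to make the graph bipartite.

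The core step is the indistinguishability argument, carried out via the usual principle of deferred decisions / on-the-fly generation. I would model both experiments as revealing the graph only in response to the tester's oracle queries: whenever the tester asks for the $j$-th neighbour of some element $i$, the answer is drawn uniformly from the still-available half-edges (respecting, in $\mathcal{D}_{\text{yes}}$, the bipartition). Let $\tau$ be the first time the tester either (a) queries a vertex already visited from the ``wrong'' side, (b) creates a cycle in its explored subgraph, or (c) collides with a previously seen vertex. A direct birthday-paradox calculation shows that after $q$ queries the chance of any such collision is $O(q^2/n)$, because each new neighbour is essentially uniform over $\Theta(n)$ choices. Until time $\tau$, the view of the tester in the two experiments is an identical random $3$-regular tree rooted at the queried elements (modulo the bipartite labelling, which the tester does not see), so the two view distributions have total variation at most $O(q^2/n)$. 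For $q=o(\sqrt{n})$ this tends to $0$, so no deterministic algorithm can distinguish $\mathcal{D}_{\text{yes}}$ from $\mathcal{D}_{\text{no}}$ with constant advantage, contradicting the existence of a tester.

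The main technical obstacle is formalising the coupling cleanly in the configuration model and verifying that the conditioning to simple graphs does not distort probabilities by more than a constant factor, together with the expansion-based lower bound on the bipartiteness distance in $\mathcal{D}_{\text{no}}$. Both are standard but require care; the cycle-probability estimate via the birthday bound, by contrast, is essentially a one-line computation once the deferred-decisions coupling is set up.
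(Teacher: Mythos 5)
This theorem is not proved in the paper at all; it is imported verbatim from Goldreich and Ron, and your proposal is essentially a reconstruction of their argument: two hard distributions on $3$-regular $n$-vertex graphs (one bipartite, one almost surely $\epsilon$-far from bipartite by an expansion/max-cut argument), Yao's minimax principle, and a birthday-paradox bound of $O(q^2/n)$ on the probability that $q$ queries ever close a cycle or collide, below which the two local views are identically distributed random trees. Your specific choice of hard distributions (uniform random $3$-regular versus three random perfect matchings across a bipartition) is a standard variant of Goldreich and Ron's Hamiltonian-cycle-plus-random-matching construction and works equally well, so the approach matches the cited proof in all essentials.

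One point you must make explicit for the argument to go through: the bipartition underlying $\mathcal{D}_{\text{yes}}$ has to be a uniformly random \emph{hidden} partition of the vertex labels, not a fixed split such as $\{1,\dots,n/2\}$ versus $\{n/2+1,\dots,n\}$. With a fixed split, a single query already reveals that every neighbour of a low-indexed vertex has a high index, and the tester distinguishes the two distributions with $O(1)$ queries. Your parenthetical remark that the tester does not see the bipartite labelling suggests you intend the partition to be random, but as literally written (``two sides of size $n/2$'') the construction does not guarantee this, and the deferred-decisions coupling only yields a uniform fresh neighbour in each step once the partition is randomised.
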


\begin{lemma}\label{lemma: comparing models 2}  There exists a class $\mathbf{C}$ of $\sigma$-dbs and a property $\mathbf{P} \subseteq \mathbf{C}$ such that $\mathbf{P}$ is trivially testable on $\mathbf{C}$ in the \BDRDnew model but is not testable on $\mathbf{C}$ in the \BDRD model.
\end{lemma}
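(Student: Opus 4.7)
The plan is to construct a property $\mathbf{P}$ that coincides with bipartiteness on even-sized inputs (where Theorem~\ref{thrm: bipartite tester} gives a $\Omega(\sqrt{n})$ lower bound in $\BDRD$) but is automatically satisfied on odd-sized inputs, so that in the $\BDRDnew$ model inserting a single isolated element is already enough to place any even-sized input into $\mathbf{P}$. Concretely, I will assume (without loss of generality, by restricting attention to any binary relation of $\sigma$) that $\sigma$ contains a binary symmetric irreflexive relation $E$ encoding undirected graphs, fix some $d\ge 3$, take $\mathbf{C}=\mathbf{C}_d$, and set
\[
  \mathbf{P} \;=\; \{\mathcal{D}\in\mathbf{C}_d : \mathcal{D}\text{ is bipartite, or } |D|\text{ is odd}\}.
\]
Both disjuncts are isomorphism-invariant, so $\mathbf{P}$ is a genuine property and clearly $\mathbf{P}\subseteq\mathbf{C}$.

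For trivial testability in $\BDRDnew$, I would argue as follows. If $|D|=n$ is odd then $\mathcal{D}\in\mathbf{P}$ already. If $n$ is even, then inserting one fresh isolated element into $\mathcal{D}$ yields a database $\mathcal{D}'\in\mathbf{P}$ of odd size $n+1$ with $\operatorname{dist}_{+/-}(\mathcal{D},\mathcal{D}')=1$ and $\min(|D|,|D'|)=n$. Hence $\mathcal{D}$ is $\epsilon$-close to $\mathbf{P}$ whenever $n\ge\lceil 1/(\epsilon d)\rceil$. The $\epsilon$-tester that accepts every input of size at least $\lceil 1/(\epsilon d)\rceil$, and otherwise exhaustively queries all (at most $O(1/\epsilon)$) incident tuples of the input and decides $\mathbf{P}$-membership directly, is thus a correct constant-query tester for every fixed $\epsilon$.

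For non-testability in $\BDRD$, the key observation is that $\BDRD$-distance between databases of different sizes is infinite, so an even-sized input of size $n$ is $\epsilon$-far from $\mathbf{P}$ in $\BDRD$ iff it is $\epsilon$-far from every bipartite graph of size $n$ (the "odd size" disjunct is vacuous when $n$ is even); correspondingly, $\mathcal{D}\in\mathbf{P}$ iff $\mathcal{D}$ is bipartite. A constant-query $\BDRD$ $\epsilon$-tester for $\mathbf{P}$, restricted to even-sized inputs, would therefore be a constant-query $\BDRD$ tester for bipartiteness on bounded-degree even-sized graphs, contradicting Theorem~\ref{thrm: bipartite tester}. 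The step that needs the most care is verifying that this lower bound really does transfer to the restricted class $\mathbf{C}_d$ and even-sized inputs, but the standard hard instances (random $3$-regular graphs) already have even size and lie in $\mathbf{C}_d$ since $d\ge 3$, so the reduction goes through without modification.
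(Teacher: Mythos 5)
Your proposal is correct and matches the paper's own proof essentially verbatim: the paper also takes $\mathbf{P}$ to be the union of the bipartite graphs with the graphs of odd order, observes that adding one isolated vertex makes every sufficiently large input $\epsilon$-close to $\mathbf{P}$ in the \BDRDnew model, and reduces \BDRD non-testability to the Goldreich--Ron bipartiteness lower bound on even-sized inputs. No further changes are needed.
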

\begin{proof}
	Let $\mathbf{C}$ be the class of all graphs with degree at most $d$. Let $\mathbf{P} = \mathbf{P_1} \cup \mathbf{P_2} \subseteq \mathbf{C}$ be the property where $\mathbf{P_1}$ contains all bipartite graphs in $\mathbf{C}$ and $\mathbf{P_2}$ contains all graphs in $\mathbf{C}$ that have an odd number of vertices. In the \BDRDnew model every $\mathcal{G} \in \mathbf{C}$ is $\epsilon$-close to $\mathbf{P}$ if $|V(\mathcal{G})|\geq 1/(\epsilon d) $
 and hence $\mathbf{P}$ is trivially testable on $\mathbf{C}$ in the \BDRDnew model (the tester accepts if $|V(\mathcal{G})|\geq 1/(\epsilon d) $ and does a full check of the input otherwise). In the \BDRD model, if the input graph has an even number of vertices then it is far from $\mathbf{P_2}$ and so we have to test for $\mathbf{P_1}$. By Theorem \ref{thrm: bipartite tester}, bipartiteness is not testable (with constant query complexity) in the \BDRD model. In particular, in the proof of Theorem \ref{thrm: bipartite tester}, Goldreich and Ron show that for any even $n$ there exists two families, $\mathcal{G}_1 \subseteq \mathbf{C}$ and $\mathcal{G}_2 \subseteq \mathbf{C}$, of $n$-vertex graphs such that every graph in $\mathcal{G}_1 $ is bipartite and almost all graphs in $\mathcal{G}_2 $ are far from being bipartite but any algorithm that performs $o(\sqrt{n})$ queries cannot distinguish between a graph chosen randomly from $\mathcal{G}_1 $ and a graph chosen randomly from $\mathcal{G}_2 $. Therefore $\mathbf{P}$ is not testable on $\mathbf{C}$ in the \BDRD model.
\end{proof}

Note that the underlying general principle of the above proof can be applied to obtain further examples of properties that are testable in the \BDRDnew model but not testable in the \BDRD model.

It is known that every hyperfinite property is `local' (Theorem~\ref{thrm: local-global}), where `local' means that if a $\sigma$-db $\mathcal{D}$ has a similar $r$-histogram to some $\sigma$-db (with the same domain size) that has the (hyperfinite) property, then $\mathcal{D}$ must be $\epsilon$-close to the property~\cite{newman2013every,adler2018property}. This is summarised in Theorem~\ref{thrm: local-global} below.
We use Theorem~\ref{thrm: local-global} to prove a similar result in the \BDRDnew model (Lemma~\ref{lemma: locality}).  Lemma~\ref{lemma: locality} is essential for the proof of Theorem \ref{thrm: small dbs}.

\begin{theorem}[\cite{newman2013every,adler2018property}]\label{thrm: local-global}
Let $\epsilon \in (0,1]$ and let $\mathbf{C}$ be closed under removing tuples. If a property $\mathbf{P} \subseteq \mathbf{C}$ is hyperfinite on $\mathbf{C}$ then there exists $\lambda_{\ref{thrm: local-global}} := \lambda_{\ref{thrm: local-global}} (\epsilon) \in (0,1]$ and $r_{\ref{thrm: local-global}} := r_{\ref{thrm: local-global}}(\epsilon) \in \mathbb{N}$ such that for each $\mathcal{D} \in \mathbf{P}$ and $ \mathcal{D'} \in \mathbf{C}$ with the same number $n$ of elements, if $\|h_{r_{\ref{thrm: local-global}}}(\mathcal{D})- h_{r_{\ref{thrm: local-global}}}(\mathcal{D'})\|_1 \leq \lambda_{\ref{thrm: local-global}} n$, then $\mathcal{D'}$ is $\epsilon$-close to $\mathbf{P}$ in the \BDRD model.
\end{theorem}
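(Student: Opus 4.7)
The plan is to construct, for any $\mathcal{D}' \in \mathbf{C}$ satisfying the histogram hypothesis, a database in $\mathbf{P}$ that is close to $\mathcal{D}'$ in the \BDRD sense. The strategy is to exploit the hyperfiniteness of $\mathbf{P}$ to replace $\mathcal{D}$ by a ``fragmented'' witness $\mathcal{D}^*$ with uniformly small connected components, and then to match these components against identical structures already occurring as isolated components of $\mathcal{D}'$.

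First, fix small parameters $\delta,\lambda>0$ (to be chosen as functions of $\epsilon$). Because $\mathbf{P}$ is hyperfinite on $\mathbf{C}$, there exist $k:=\rho(\delta)$ and a $(\delta,k)$-partition $\mathcal{D}^* \in \mathbf{C}$ of $\mathcal{D}$ obtained by deleting at most $\delta d n$ tuples. Setting $r:=k$, the $r$-neighbourhood of every element of $\mathcal{D}^*$ equals its entire connected component, so each $r$-type realised in $\mathcal{D}^*$ is \emph{closed}: whenever an element $a$ of any bounded-degree database has such an $r$-type, $\mathcal{N}_r(a)$ is already a union of connected components. Since a single tuple deletion alters the $r$-type of at most some constant $c=c(r,d)$ elements, $\|\h_r(\mathcal{D}^*)-\h_r(\mathcal{D})\|_1 \leq c\,\delta d n$, and the hypothesis combined with the triangle inequality gives $\|\h_r(\mathcal{D}^*)-\h_r(\mathcal{D}')\|_1 \leq (\lambda+c\delta d)\,n$.

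Next, I would lift this element-level bound to a component-level matching. Every component of $\mathcal{D}^*$ has at most $k$ elements, so its isomorphism type is determined by the $r$-type of any of its members. Reading off the $r$-histograms thus yields, for each (finitely many) component-isomorphism type $\mathcal{C}$, a count $m^*_\mathcal{C}$ in $\mathcal{D}^*$ and a count $m'_\mathcal{C}$ of fully isolated $\mathcal{C}$-shaped components in $\mathcal{D}'$, whose total absolute discrepancy is $O((\lambda+\delta d) n)$. Pair them greedily; this matches at least $n - O((\lambda+\delta d) n)$ elements of $\mathcal{D}'$ to components of $\mathcal{D}^*$. Since the total of matched elements agrees on both sides, I can transform $\mathcal{D}'$ into a database $\mathcal{D}'^*\cong \mathcal{D}^*$ by (i) deleting every tuple incident to an unmatched element of $\mathcal{D}'$ (at most $d$ tuples each) and (ii) inserting tuples among the freed elements to realise the unmatched $\mathcal{D}^*$-components, using a total of $O((\lambda+\delta d)\,d n)$ tuple modifications. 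Finally, $\mathcal{D}'^*$ differs from an isomorphic copy of $\mathcal{D}\in\mathbf{P}$ by exactly the $\delta d n$ tuples removed in the partition step, so another triangle inequality yields $\operatorname{dist}(\mathcal{D}',\mathbf{P}) \leq O((\lambda+\delta d)dn) + \delta d n$ in the \BDRD metric. Choosing $\delta$ small enough depending on $\epsilon$ and $d$, and $\lambda_{\ref{thrm: local-global}}(\epsilon)$ correspondingly, makes this at most $\epsilon d n$, and I set $r_{\ref{thrm: local-global}}(\epsilon):=\rho(\delta)$.

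The main obstacle will be making the component-level matching rigorous when several elements of distinct $r$-types live inside the same component of $\mathcal{D}^*$: the bookkeeping between element-type counts and component-type counts must be consistent, and the unmatched totals on both sides must balance. I would handle this by observing that for $r=k$ the multiset of $r$-types realised inside a component uniquely identifies the component type (each element contributes one entry according to its role), so component counts are recoverable from the element-level histograms; this reduces the whole matching problem to a collection of independent greedy matchings, one per component type, each controlled by the corresponding slice of the histogram discrepancy.
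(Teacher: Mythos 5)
First, a point of reference: the paper does not prove Theorem~\ref{thrm: local-global} at all --- it is imported from \cite{newman2013every,adler2018property} --- so your attempt has to be judged against the argument in those sources. Your overall architecture (partition $\mathcal{D}$ into components of size at most $k$, observe that for $r=k$ the resulting $r$-types are ``closed'', match component types, edit) is the right family of ideas, and your bookkeeping relating element-level histograms to component-type counts is sound. The fatal step is the inequality $\|\operatorname{h}_r(\mathcal{D}^*)-\operatorname{h}_r(\mathcal{D})\|_1\leq c\,\delta n$ with $c=c(r,d)$ and $r=k=\rho(\delta)$: a single tuple deletion can change the $r$-type of every element within distance $r$ of it, so $c(r,d)$ grows like $d^{r}=d^{\rho(\delta)}$, and the error term is $\Theta\bigl(\delta\, d^{\rho(\delta)}\bigr)n$. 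There is in general no choice of $\delta$ making $\delta\, d^{\rho(\delta)}$ small: for the classes relevant here (bounded tree-width, planar, \dots) one has $\rho(\delta)=\Theta(1/\delta)$ or worse, so $\delta\, d^{\rho(\delta)}\to\infty$ as $\delta\to 0$. In your final estimate the factor $c$ silently disappears ($O((\lambda+c\delta d)n)$ becomes $O((\lambda+\delta d)n)$), which hides the problem. A concrete failure: let $\mathbf{P}$ be the class of grids and $\mathcal{D}=\mathcal{D}'$ the $\sqrt{n}\times\sqrt{n}$ grid, so the hypothesis holds with histogram discrepancy $0$ and the conclusion is trivial. Cutting the grid into $k\times k$ blocks changes the $k$-type of essentially every vertex, so $\|\operatorname{h}_r(\mathcal{D}^*)-\operatorname{h}_r(\mathcal{D}')\|_1\approx 2n$; moreover the connected $\mathcal{D}'$ contains no isolated small components, so your matching pairs nothing and your editing procedure costs $\Theta(dn)$ rather than $O(\epsilon dn)$.

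The missing idea is a way to transfer the partition of $\mathcal{D}$ to $\mathcal{D}'$, rather than comparing the fragmented $\mathcal{D}^*$ against the unfragmented $\mathcal{D}'$. This is what \cite{newman2013every} (and, for databases, \cite{adler2018property}) achieve via local partition oracles in the sense of \cite{HassidimKNO09}: the $(\delta,k)$-partition can be realised by a randomised rule that assigns each element to its part by inspecting only its radius-$r_0$ ball, for some $r_0=r_0(\epsilon)$ depending on $\rho$ and $d$ but not on $n$. Because the behaviour of this rule is a function of the $r_0$-disc statistics, running it on $\mathcal{D}'$ --- whose $r_0$-histogram is close to that of $\mathcal{D}$ by hypothesis --- produces a partition $(\mathcal{D}')^*$ of $\mathcal{D}'$ that also removes only $O(\delta n)$ tuples and whose distribution of component types is close to that of $\mathcal{D}^*$. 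Your matching-and-editing argument then applies to the pair $(\mathcal{D}^*,(\mathcal{D}')^*)$, both already unions of small components, where it does work; the radius $r_{\ref{thrm: local-global}}(\epsilon)$ in the theorem is the oracle radius $r_0$, not the component-size bound $k$.
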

\begin{lemma}\label{lemma: locality}
Let $\epsilon \in (0,1]$ and let $\mathbf{C}$ be closed under removing tuples. If a property $\mathbf{P} \subseteq \mathbf{C}$ is hyperfinite on $\mathbf{C}$ then there exists $\lambda := \lambda (\epsilon) \in (0,1]$ and $r := r(\epsilon) \in \mathbb{N}$ such that for each $\mathcal{D} \in \mathbf{P}$ and $ \mathcal{D'} \in \mathbf{C}$, on $|D|$ and $|D'|$ elements respectively, if $\|\operatorname{h}_r(\mathcal{D})- \operatorname{h}_r(\mathcal{D'})\|_1 \leq \lambda \operatorname{min}\{|D|,|D'|\}$, then $\mathcal{D'}$ is $\epsilon$-close to $\mathbf{P}$ in the \BDRDnew model.
\end{lemma}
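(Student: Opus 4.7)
The plan is to reduce Lemma~\ref{lemma: locality} to Theorem~\ref{thrm: local-global} by first equalising the sizes of $\mathcal{D}$ and $\mathcal{D}'$ through a small number of element insertions or deletions, and then absorbing the cost of these modifications into the overall error budget. The crucial preliminary observation is that the $L_1$ histogram bound automatically controls the size gap: since the entries of $\operatorname{h}_r(\mathcal{D})$ and $\operatorname{h}_r(\mathcal{D}')$ sum to $|D|$ and $|D'|$ respectively, one has $\bigl| |D|-|D'| \bigr| \leq \|\operatorname{h}_r(\mathcal{D}) - \operatorname{h}_r(\mathcal{D}')\|_1 \leq \lambda \min\{|D|,|D'|\}$, so the number of elements one needs to add or remove to match sizes is at most $\lambda\min\{|D|,|D'|\}$.

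First I would invoke Theorem~\ref{thrm: local-global} with proximity $\epsilon/2$ to obtain constants $\lambda_0$ and $r_0$, and set $r := r_0$. I would also fix a constant $C_r$ that upper-bounds the $L_1$ change of the $r$-histogram caused by deleting a single element from a bounded-degree database; such a constant exists because only elements within distance $r$ of the deleted one can have their $r$-type altered, and the number of such elements is bounded in terms of $d$ and $r$.

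Then I would split into two cases. If $|D'| \geq |D|$, set $k := |D'| - |D| \leq \lambda|D|$ and delete any $k$ elements of $\mathcal{D}'$ to obtain $\mathcal{D}'' \in \mathbf{C}$ with $|D|$ elements. A triangle inequality on histograms yields $\|\operatorname{h}_r(\mathcal{D}) - \operatorname{h}_r(\mathcal{D}'')\|_1 \leq (1+C_r)\lambda|D|$, so choosing $\lambda \leq \lambda_0/(1+C_r)$ lets me apply Theorem~\ref{thrm: local-global} and obtain $\mathcal{E}\in\mathbf{P}$ with $|E| = |D|$ and $\operatorname{dist}(\mathcal{D}'',\mathcal{E}) \leq (\epsilon/2)d|D|$. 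Triangulating via $\mathcal{D}''$ then gives $\operatorname{dist}_{+/-}(\mathcal{D}',\mathcal{E}) \leq k + (\epsilon/2)d|D| \leq \epsilon d|D|$, provided also $\lambda \leq \epsilon d/2$. The case $|D|>|D'|$ is analogous but with $k := |D|-|D'|$ isolated elements added to $\mathcal{D}'$ to form a database ${\mathcal{D}'}^+$ of size $|D|$; isolated elements contribute only to one fixed $r$-type, so $\|\operatorname{h}_r(\mathcal{D}) - \operatorname{h}_r({\mathcal{D}'}^+)\|_1 \leq 2\lambda|D|$. Theorem~\ref{thrm: local-global} then supplies an $\mathcal{E}\in\mathbf{P}$ close to ${\mathcal{D}'}^+$ in the \BDRD sense, and a second triangle inequality closes the argument. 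Taking $\lambda$ as the minimum of the finitely many constants collected above yields the desired parameters.

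The main obstacle is the case $|D|>|D'|$: the natural reduction requires ${\mathcal{D}'}^+ \in \mathbf{C}$ so that Theorem~\ref{thrm: local-global} may be applied with ${\mathcal{D}'}^+$ as the second argument. For the bounded-degree classes of interest in the paper (notably $\mathbf{C}_d$ and $\mathbf{C}_d^t$) this is automatic, but for an arbitrary $\mathbf{C}$ which is only assumed to be closed under removing tuples one would either have to strengthen the hypothesis to include closure under adding isolated elements, or instead delete $k$ well-chosen elements from $\mathcal{D}$ and re-run the underlying hyperfiniteness argument of Theorem~\ref{thrm: local-global} by hand to produce a suitable $\mathcal{E}\in\mathbf{P}$ directly.
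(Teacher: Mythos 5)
Your proposal is correct and follows essentially the same route as the paper's proof: equalise $|D|$ and $|D'|$ by inserting or deleting at most $\bigl||D|-|D'|\bigr|\leq\lambda\min\{|D|,|D'|\}$ elements, bound the resulting histogram perturbation by a constant (the paper uses $|N_r(a)|\leq d^{r+1}$ where you use $C_r$), apply Theorem~\ref{thrm: local-global} at a reduced proximity parameter ($\epsilon/4$ in the paper, $\epsilon/2$ for you), and triangulate the $\operatorname{dist}_{+/-}$ bounds. The caveat you raise about whether the size-adjusted database remains in $\mathbf{C}$ is a point the paper's own proof also passes over silently, so it does not distinguish your argument from theirs.
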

\begin{proof}
Let $r =r_{\ref{thrm: local-global}}(\epsilon/4)$ and let $\lambda = \frac{\epsilon\lambda_{\ref{thrm: local-global}} (\epsilon /4)}{1+d^{r+1}}$. Let us assume that $\|\operatorname{h}_r(\mathcal{D})- \operatorname{h}_r(\mathcal{D'})\|_1 \leq \lambda  \operatorname{min}\{|D|,|D'|\}$ and $\mathbf{P}$ is hyperfinite on $\mathbf{C}$. If $|D|=|D'|$ then by Theorem~\ref{thrm: local-global} and the choice of $\lambda$, $\mathcal{D'}$ is $\epsilon$-close to $\mathbf{P}$. So let us assume that $|D|\neq|D'|$. Let $\mathcal{D}_1$ be the $\sigma$-db on $|D|$ elements formed from $\mathcal{D'}$ by either removing $|D'|-|D|$ elements if $|D| < |D'|$ or adding $|D|-|D'|$ new elements if $|D'| < |D|$. 
Note that as $\|\operatorname{h}_r(\mathcal{D})- \operatorname{h}_r(\mathcal{D'})\|_1 \leq \lambda  \operatorname{min}\{|D|,|D'|\}$ and by definition $\|\operatorname{h}_r(\mathcal{D})- \operatorname{h}_r(\mathcal{D'})\|_1 = \sum_{i=1}^{\operatorname{c}(r)}|\operatorname{h}_r(\mathcal{D})- \operatorname{h}_r(\mathcal{D'})|$ we have $\big| |D|-|D'| \big| \leq \lambda  \operatorname{min}\{|D|,|D'|\}$.
When an element $a$ is removed, the $r$-type of any element in $N_r(a)$ will change. 
	As $|N_r(a)| \leq d^{r+1}$ (cf.\ e.\,g.\ Lemma 3.2 (a) of \cite{berkholz2018answering}) and $\big| |D|-|D'| \big| \leq \lambda  \operatorname{min}\{|D|,|D'|\}$, we have $\|\operatorname{h}_r(\mathcal{D'})- \operatorname{h}_r(\mathcal{D}_1)\|_1 \leq \lambda  \operatorname{min}\{|D|,|D'|\}d^{r+1}$. 
Therefore \[\|\operatorname{h}_r(\mathcal{D})- \operatorname{h}_r(\mathcal{D}_1)\|_1 \leq \lambda  \operatorname{min}\{|D|,|D'|\}(1 +d^{r+1}) \leq \lambda_{\ref{thrm: local-global}} (\epsilon /4) |D|\] by the choice of $\lambda$. By Theorem \ref{thrm: local-global}, in the \BDRD model $\mathcal{D}_1$ is $\epsilon /4$-close to $\mathbf{P}$. Hence there exists a $\sigma$-db $\mathcal{D}_{2} \in \mathbf{P}$ such that $|D_2|=|D|$ and $\operatorname{dist}(\mathcal{D}_1, \mathcal{D}_2) \leq \epsilon d |D|/4$. By the definition of the two distance measures $\operatorname{dist}$ and $\operatorname{dist}_{+/-}$, we have $\operatorname{dist}_{+/-}(\mathcal{D}_1, \mathcal{D}_2) \leq \operatorname{dist}(\mathcal{D}_1, \mathcal{D}_2)\leq \epsilon d |D|/4$ and by the choice of $\mathcal{D}_1$ we have $\operatorname{dist}_{+/-}(\mathcal{D'}, \mathcal{D}_1) \leq \lambda  \operatorname{min}\{|D|,|D'|\}$. Therefore \[\operatorname{dist}_{+/-}(\mathcal{D'}, \mathcal{D}_2) \leq \frac{\epsilon d |D|}{4} + \lambda  \operatorname{min}\{|D|,|D'|\} \leq \epsilon d \operatorname{min}\{|D|,|D'|\},\] as $|D| \leq \operatorname{min}\{|D|,|D'|\} + \lambda  \operatorname{min}\{|D|,|D'|\} \leq 2 \operatorname{min}\{|D|,|D'|\} $ and $\lambda \leq \epsilon d /2$. Hence in the \BDRDnew model $\mathcal{D}'$ is $\epsilon$-close to $\mathbf{P}$.
\end{proof}

\section{Main Results}\label{sec: main results}
We begin this section with the first of our main theorems (Theorem~\ref{thrm: small dbs}). We show that for any property $\mathbf{P}$ which is hyperfinite on the input class $\mathbf{C}$, if the set of $r$-histograms of $\mathbf{P}$ is semilinear, then for every $\sigma$-db $\mathcal{D}$ in $\mathbf{P}$ there exists a constant size $\sigma$-db in $\mathbf{P}$ with a neighbourhood distribution similar to that of $\mathcal{D}$, but this is not true for $\sigma$-dbs in $\mathbf{C}$ that are far from $\mathbf{P}$.
We then use this result to prove that such properties are testable in constant time in the \BDRDnew model (Theorem~\ref{thm: constant time tester}).
As a corollary we obtain that CMSO definable properties on $\sigma$-dbs of bounded tree-width and bounded degree are testable in constant time (Theorem~\ref{thm: CMSO testability}).

\begin{theorem}\label{thrm: small dbs}
 Let $\epsilon \in (0,1]$ and let $r := r(\epsilon)$ be as in Lemma \ref{lemma: locality}.  Let $\mathbf{C}$ be closed under removing tuples and let $\mathbf{P} \subseteq \mathbf{C}$ be a property that is hyperfinite on $\mathbf{C}$ such that the set $\operatorname{h}_r(\mathbf{P})$ is semilinear. There exist $n_{\text{min}}:=n_{\text{min}}(\epsilon),n_{\text{max}}:=n_{\text{max}}(\epsilon) \in \mathbb{N}$ and $f:=f(\epsilon), \mu:=\mu(\epsilon) \in (0,1)$ such that for every $\mathcal{D} \in \mathbf{C}$ with $|D| > n_{\text{max}}$,
\begin{enumerate}
\item if $\mathcal{D} \in \mathbf{P}$, then there exists a $\mathcal{D'} \in \mathbf{P}$ such that $n_{\text{min}} \leq |D'| \leq n_{\text{max}}$ and $\|\operatorname{dv}_r(\mathcal{D})-\operatorname{dv}_r(\mathcal{D'})\|_1 \leq f - \mu $, and
\item if $\mathcal{D}$ is $\epsilon$-far from $\mathbf{P}$ (in the \BDRDnew model), then for every $\mathcal{D'} \in \mathbf{P}$ such that $n_{\text{min}} \leq |D'| \leq n_{\text{max}}$, we have $\|\operatorname{dv}_r(\mathcal{D}) -\operatorname{dv}_r(\mathcal{D'})\|_1 > f + \mu$.
\end{enumerate}
\end{theorem}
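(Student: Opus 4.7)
The proof combines the semilinear decomposition of $\operatorname{h}_r(\mathbf{P})$ with Lemma~\ref{lemma: locality}. Write $\operatorname{h}_r(\mathbf{P}) = \bigcup_{i=1}^{k} L_i$ with $L_i = \{\bar{v}_0^{(i)} + \sum_{j=1}^{k_i} a_j \bar{v}_j^{(i)} : a_j \in \mathbb{N}\}$. Since every element of a $\sigma$-db contributes to exactly one coordinate of its $r$-histogram, the domain size equals the $\ell_1$-norm of the histogram, and every element of $L_i$ is the $r$-histogram of some database in $\mathbf{P}$. Let $\lambda := \lambda(\epsilon)$ be as in Lemma~\ref{lemma: locality} and define the finite constants $s_0 := \max_i \|\bar{v}_0^{(i)}\|_1$, $s_{\max} := \max_{i,j} \|\bar{v}_j^{(i)}\|_1$ and $K := \max_i k_i$.

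For Part~(1), given $\mathcal{D} \in \mathbf{P}$ with $n = |D|$, pick $i$ and a representation $\operatorname{h}_r(\mathcal{D}) = \bar{v}_0^{(i)} + \sum_j a_j \bar{v}_j^{(i)}$. Put $b_j := \lceil (n_{\min}/n) a_j \rceil$ and let $\mathcal{D'} \in \mathbf{P}$ realise $\operatorname{h}_r(\mathcal{D'}) = \bar{v}_0^{(i)} + \sum_j b_j \bar{v}_j^{(i)}$ (which exists since this vector lies in $L_i \subseteq \operatorname{h}_r(\mathbf{P})$). A direct estimation shows that $|D'|$ lies within a bounded constant of $n_{\min}$ and that $\|\operatorname{dv}_r(\mathcal{D}) - \operatorname{dv}_r(\mathcal{D'})\|_1 = O((s_0 + K s_{\max})/n_{\min})$ uniformly in $\mathcal{D}$; hence for $n_{\min}$ large enough this error is at most the target $f - \mu$.

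For Part~(2), we proceed by contradiction: suppose $\mathcal{D}$ is $\epsilon$-far from $\mathbf{P}$ while some $\mathcal{D'} \in \mathbf{P}$ with $|D'| \in [n_{\min}, n_{\max}]$ satisfies $\|\operatorname{dv}_r(\mathcal{D}) - \operatorname{dv}_r(\mathcal{D'})\|_1 \leq f + \mu$. Writing $\operatorname{h}_r(\mathcal{D'}) = \bar{v}_0^{(i)} + \sum_j b_j \bar{v}_j^{(i)}$, set $c_j := \operatorname{round}((|D|/|D'|) b_j)$ and pick $\mathcal{D}^* \in \mathbf{P}$ realising $\operatorname{h}_r(\mathcal{D}^*) = \bar{v}_0^{(i)} + \sum_j c_j \bar{v}_j^{(i)}$. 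Using $\operatorname{h}_r(\mathcal{D}) = (|D|/|D'|) \operatorname{h}_r(\mathcal{D'}) + \bar{e}$ with $\|\bar{e}\|_1 = |D| \|\operatorname{dv}_r(\mathcal{D}) - \operatorname{dv}_r(\mathcal{D'})\|_1 \leq |D|(f+\mu)$, the triangle inequality gives
\[\|\operatorname{h}_r(\mathcal{D}) - \operatorname{h}_r(\mathcal{D}^*)\|_1 \leq |D|(f+\mu) + (|D|/|D'| - 1) s_0 + (K/2)\, s_{\max},\]
while a separate calculation yields $|D^*| = |D|(1 - s_0^{(i)}/|D'|) + s_0^{(i)} + O(s_{\max})$. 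Division by $|D|$ reduces the hypothesis $\|\operatorname{h}_r(\mathcal{D}) - \operatorname{h}_r(\mathcal{D}^*)\|_1 \leq \lambda \min(|D|,|D^*|)$ of Lemma~\ref{lemma: locality} to
\[(f+\mu) + \frac{s_0(1+\lambda)}{|D'|} + O\!\left(\frac{s_{\max}}{|D|}\right) \leq \lambda,\]
which holds provided $n_{\min}$ and $n_{\max}$ are large enough that each additive term on the left is at most $(\lambda - f - \mu)/2$. Lemma~\ref{lemma: locality} then forces $\mathcal{D}$ to be $\epsilon$-close to $\mathbf{P}$, contradicting the hypothesis.

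The parameters are selected as follows: pick $f, \mu \in (0,1)$ with $f + \mu$ strictly less than $\lambda$; take $n_{\min}$ large enough (in terms of $s_0, s_{\max}, K, \lambda, f, \mu$) so that the Part~(1) error is at most $f - \mu$ and $s_0(1+\lambda)/n_{\min} \leq (\lambda - f - \mu)/2$; and take $n_{\max}$ both at least the top of the Part~(1) range for $|D'|$ and large enough that $(K/2)\, s_{\max}/n_{\max} \leq (\lambda - f - \mu)/2$. The main obstacle is Part~(2): naively scaling $\mathcal{D'}$ by an integer multiple fails because the offset contribution $\bar{v}_0^{(i)}/|D'|$ produces a non-vanishing shift in the normalised distribution that cannot be driven to zero. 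The resolution is to approximate $(|D|/|D'|) \operatorname{h}_r(\mathcal{D'})$ by a new lattice point in the same linear piece, paying only a rounding cost $O(s_{\max})$ that is affordable once $n_{\min}$ exceeds an explicit $\lambda$-dependent threshold.
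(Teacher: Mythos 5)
Your proposal is correct and follows essentially the same route as the paper's proof: decompose $\operatorname{h}_r(\mathbf{P})$ into linear pieces, round the lattice coefficients down by a factor $\approx n_{\min}/n$ to produce the constant-size witness in Part~(1), and for Part~(2) argue by contradiction, rescaling the small witness back up to a database on $\approx |D|$ elements and invoking Lemma~\ref{lemma: locality}. The only differences are cosmetic (scaling to $n_{\min}$ rather than to the midpoint of $[n_{\text{min}},n_{\text{max}}]$, and leaving the constants in $O(\cdot)$ form where the paper instantiates $f=\lambda/(3c)$, $\mu=\lambda/(6c)$ explicitly).
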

\begin{proof}
	Let $\lambda := \lambda(\epsilon)$ be as in Lemma \ref{lemma: locality} and $c:=\operatorname{c}(r)$ (the number of $r$-types). First note that if $\mathbf{P}$ is empty then for any choice of $n_{\text{min}}$, $n_{\text{max}}$, $f$ and $\mu$, both 1. and 2. in the theorem statement are true and hence we shall assume that $\mathbf{P}$ is non-empty. As $\operatorname{h}_r(\mathbf{P})$ is a semilinear set we can write it as follows, $\operatorname{h}_r(\mathbf{P}) = M_1 \cup M_2 \cup \dots \cup M_m$ where $m \in \mathbb{N}$ and for each $i \in [m]$, $M_i = \{ \bar{v}_0^i + a_1 \bar{v}_1^i + \dots + a_{k_i} \bar{v}_{k_i}^i \mid a_1,\dots,a_{k_i} \in \mathbb{N} \}$ is a linear set where $\bar{v}_0^i,\dots,\bar{v}_{k_i}^i \in \mathbb{N}^{c}$ and for each $j \in [k_i]$, $\|\bar{v}_j^i\|_1 \neq 0$. Let $k:=\max_{i \in [m]}k_i + 1$ and $v:=\max_{i \in [m]} \Big(\max_{j \in [0,k_i]}\|\bar{v}_j^i\|_1\Big)$ (note that $v >0$ as $\mathbf{P}$ is non-empty). Let $n_{\text{min}}:= n_0 - kv$, $n_{\text{max}}:= n_0 + kv$, $f := \frac{\lambda}{3c}$, and $\mu := \frac{\lambda}{6c}$
 where \[n_0:=kv\Big(\frac{3ckv}{f-\mu} + 1 \Big).\]
 Note that $n_{\text{min}} > 0$ by the choice of $n_0$, $f$ and $\mu$.

	(Proof of 1.) Assume $\mathcal{D} \in \mathbf{P}$ and $|D|=n > n_{\text{max}}$. Then there exists some $i \in [m]$ and $a_1^{\mathcal{D}},\dots,a_{k_i}^{\mathcal{D}} \in \mathbb{N}$ such that $\operatorname{h}_r(\mathcal{D}) = \bar{v}_0^i + a_1^{\mathcal{D}} \bar{v}_1^i + \dots + a_{k_i}^{\mathcal{D}}\bar{v}_{k_i}^i$ (note that $n = \|\bar{v}_0^i\|_1+ \sum_{j \in [k_i]} a_j^{\mathcal{D}} \|\bar{v}_j^i\|_1 $). Let $\mathcal{D'}$ be the $\sigma$-db with $r$-histogram $\bar{v}_0^i + a_1^{\mathcal{D'}} \bar{v}_1^i + \dots + a_{k_i}^{\mathcal{D'}}\bar{v}_{k_i}^i \in M_i$ where $a_j^{\mathcal{D'}}$ is the nearest integer to $a_j^{\mathcal{D}} n_{0} /n$, and hence $ a_j^{\mathcal{D}} n_0/n -1/2 \leq a_j^{\mathcal{D'}} \leq a_j^{\mathcal{D}} n_0/n + 1/2$. Note that since $\bar{v}_0^i + a_1^{\mathcal{D'}} \bar{v}_1^i + \dots + a_{k_i}^{\mathcal{D'}}\bar{v}_{k_i}^i \in \operatorname{h}_r(\mathbf{P})$, $\mathcal{D'}$ exists and $\mathcal{D'} \in \mathbf{P}$. We need to show that $n_{\text{min}} \leq |D'| \leq n_{\text{max}}$ and $\|\operatorname{dv}_r(\mathcal{D}) -\operatorname{dv}_r(\mathcal{D'})\|_1 \leq f - \mu$.

\begin{claim}[$\ast$]\label{claim: lower bound on size}
$|D'| \geq n_{\text{min}}$.
\end{claim}


\begin{claim}[$\ast$]\label{claim:upper bound on size}
$|D'| \leq n_{\text{max}}$.
\end{claim}


\begin{claim}
$\|\operatorname{dv}_r(\mathcal{D}) -\operatorname{dv}_r(\mathcal{D'})\|_1 \leq f - \mu$.
\end{claim}
\begin{claimproof}
	By definition, $\|\operatorname{dv}_r(\mathcal{D}) -\operatorname{dv}_r(\mathcal{D'})\|_1 = \sum _{j \in [c]}|\operatorname{dv}_r(\mathcal{D})[j]-\operatorname{dv}_r(\mathcal{D'})[j]|$. First recall that $0 < n_0 -kv \leq |D'| \leq n_0 +kv < n$ and note that for every $\ell \in [k_i]$, $a_{\ell}^{\mathcal{D}} \leq n$ (since $\|\bar{v}_{\ell}^i\|_1 \neq 0$). Then for every $j \in [c]$, by the choice of $a_{\ell}^{\mathcal{D'}}$  for $\ell \in [k_i]$,
\begin{align*}
 &\operatorname{dv}_r(\mathcal{D})[j]-\operatorname{dv}_r(\mathcal{D'})[j] 
 = \bar{v}_0^i[j]\Big(\frac{1}{n} - \frac{1}{|D'|}\Big) + \sum_{\ell \in [k_i]}\bar{v}_{\ell}^i[j]\Big(\frac{a_{\ell}^{\mathcal{D}}}{n} - \frac{a_{\ell}^{\mathcal{D'}}}{|D'|}\Big) \\
 &\leq \sum_{\ell \in [k_i]}\bar{v}_{\ell}^i[j]\Big(\frac{a_{\ell}^{\mathcal{D}}}{n} - \frac{a_{\ell}^{\mathcal{D}} n_0}{n|D'|} + \frac{1}{2|D'|}\Big) 
 = \sum_{\ell \in [k_i]}\bar{v}_{\ell}^i[j]\Big(\frac{a_{\ell}^{\mathcal{D}}}{n}\Big(\frac{|D'| - n_0}{|D'|}\Big) + \frac{1}{2|D'|}\Big) \\
 &\leq \sum_{\ell \in [k_i]}\bar{v}_{\ell}^i[j]\Big(\frac{n}{n}\Big(\frac{kv + n_0 - n_0}{n_0-kv}\Big) + \frac{1}{2(n_0-kv)}\Big) = \Big(\frac{2kv + 1}{2(n_0-kv)}\Big)  \sum_{\ell \in [k_i]}\bar{v}_{\ell}^i[j] \\
 &\leq \frac{kv(2kv + 1)}{n_0-kv}.
 \end{align*}
On the other hand, 
 \begin{align*}
 &\operatorname{dv}_r(\mathcal{D})[j]-\operatorname{dv}_r(\mathcal{D'})[j] 
 \geq -\frac{\bar{v}_0^i[j]}{|D'|} + \sum_{\ell \in [k_i]}\bar{v}_{\ell}^i[j]\Big(\frac{a_{\ell}^{\mathcal{D}}}{n}\Big(\frac{|D'| - n_0}{|D'|}\Big) - \frac{1}{2|D'|}\Big) \\
 &\geq -\frac{\bar{v}_0^i[j]}{|D'|} + \sum_{\ell \in [k_i]}\bar{v}_{\ell}^i[j]\Big(\frac{a_{\ell}^{\mathcal{D}}}{n}\Big(\frac{-kv + n_0 - n_0}{|D'|}\Big) - \frac{1}{2|D'|}\Big) \\
 &= -\frac{\bar{v}_0^i[j]}{|D'|} - \sum_{\ell \in [k_i]} \bar{v}_{\ell}^i[j]\Big(\frac{a_{\ell}^{\mathcal{D}}kv}{n|D'|} + \frac{1}{2|D'|}\Big) \\
 & \geq  -\frac{\bar{v}_0^i[j]}{n_0 -kv} - \sum_{\ell \in [k_i]}\bar{v}_{\ell}^i[j]\Big(\frac{nkv}{n(n_0 -kv)} + \frac{1}{2(n_0 -kv)}\Big) \\
 &= -\frac{\bar{v}_0^i[j]}{n_0 -kv} - \Big(\frac{2kv +1}{2(n_0 -kv)}\Big) \sum_{\ell \in  [k_i]}\bar{v}_{\ell}^i[j] 
 \geq -\frac{kv(2kv + 1)}{n_0-kv}.
 \end{align*}
  
 Hence, \[|\operatorname{dv}_r(\mathcal{D})[j]-\operatorname{dv}_r(\mathcal{D'})[j]| \leq \frac{kv(2kv + 1)}{n_0-kv} \leq \frac{3(kv)^2}{n_0-kv} =\frac{f - \mu}{c}\]
 by the choice of $n_0$.
 Therefore, \[\|\operatorname{dv}_r(\mathcal{D}) -\operatorname{dv}_r(\mathcal{D'})\|_1 = \sum _{j \in [c]}|\operatorname{dv}_r(\mathcal{D})[j]-\operatorname{dv}_r(\mathcal{D'})[j]| \leq f - \mu\] as required.
  \end{claimproof}

(Proof of 2.) Assume $\mathcal{D}$ is $\epsilon$-far from $\mathbf{P}$ and $|D|=n > n_{\text{max}}$. For a contradiction let us assume there does exist a $\sigma$-db $\mathcal{D'} \in \mathbf{P}$ such that $n_{\text{min}} \leq |D'| \leq n_{\text{max}}$ and $\|\operatorname{dv}_r(\mathcal{D}) -\operatorname{dv}_r(\mathcal{D'})\|_1 \leq f + \mu$. As $\mathcal{D'} \in \mathbf{P}$ there exists some $i \in [m]$ and $a_1^{\mathcal{D'}},\dots,a_{k_i}^{\mathcal{D'}} \in \mathbb{N}$ such that $\operatorname{h}_r(\mathcal{D'}) = \bar{v}_0^i + a_1^{\mathcal{D'}} \bar{v}_1^i + \dots + a_{k_i}^{\mathcal{D'}}\bar{v}_{k_i}^i$. Let $\mathcal{D''}$ be the $\sigma$-db with $r$-histogram $\bar{v}_0^i + a_1^{\mathcal{D''}} \bar{v}_1^i + \dots + a_{k_i}^{\mathcal{D''}}\bar{v}_{k_i}^i \in M_i$ where $a_j^{\mathcal{D''}}$ is the nearest integer to $a_j^{\mathcal{D'}} n /|D'|$. Note as $\bar{v}_0^i + a_1^{\mathcal{D''}} \bar{v}_1^i + \dots + a_{k_i}^{\mathcal{D''}}\bar{v}_{k_i}^i \in \operatorname{h}_r(\mathbf{P})$, $\mathcal{D''}$ exists and $\mathcal{D''} \in \mathbf{P}$. 
\begin{claim}\label{claim: contradiction} $\mathcal{D}$ is $\epsilon$-close to $\mathbf{P}$.
\end{claim}
\begin{claimproof}
First note that as $\|\operatorname{dv}_r(\mathcal{D}) -\operatorname{dv}_r(\mathcal{D'})\|_1  \leq f+ \mu$ and $\operatorname{h}_r(\mathcal{D'}) = \bar{v}_0^i + a_1^{\mathcal{D'}} \bar{v}_1^i + \dots + a_{k_i}^{\mathcal{D'}}\bar{v}_{k_i}^i$, for every $j \in [c]$ 
\begin{align*}
&\frac{\bar{v}_0^i[j] + \sum_{\ell \in [k_i]}a_{\ell}^{\mathcal{D'}} \bar{v}_{\ell}^i[j]}{|D'|} - f- \mu\leq \operatorname{dv}_r(\mathcal{D})[j] \leq \frac{\bar{v}_0^i[j] + \sum_{\ell \in [k_i]}a_{\ell}^{\mathcal{D'}} \bar{v}_{\ell}^i[j]}{|D'|} + f + \mu
\end{align*}
 and therefore 
\begin{align*}
&n\Big(\frac{\bar{v}_0^i[j] +  \sum_{\ell \in [k_i]}a_{\ell}^{\mathcal{D'}} \bar{v}_{\ell}^i[j]}{|D'|} - f - \mu\Big)   \leq \operatorname{h}_r(\mathcal{D})[j]\leq n \Big( \frac{\bar{v}_0^i[j] +  \sum_{\ell \in [k_i]}a_{\ell}^{\mathcal{D'}} \bar{v}_{\ell}^i[j]}{|D'|} + f + \mu \Big).
\end{align*}
Hence, by the choice of $a_{\ell}^{\mathcal{D''}}$ for $\ell \in [k_i]$,
\begin{align*}
&\operatorname{h}_r(\mathcal{D})[j] - \operatorname{h}_r(\mathcal{D''})[j] 
\leq \bar{v}_0^i[j]\Big(\frac{n}{|D'|} - 1\Big) + \sum_{\ell \in [k_i]}\bar{v}_{\ell}^i[j]\Big(\frac{a_{\ell}^{\mathcal{D'}} n}{|D'|} - a_{\ell}^{\mathcal{D''}}\Big) + fn + \mu n\\
&\leq \bar{v}_0^i[j]\frac{n}{|D'|} + \sum_{\ell \in [k_i]}\bar{v}_{\ell}^i[j]\Big(\frac{a_{\ell}^{\mathcal{D'}} n}{|D'|} - \Big(\frac{a_{\ell}^{\mathcal{D'}}n}{|D'|} - \frac{1}{2}\Big)\Big) + fn + \mu n\\
&=\bar{v}_0^i[j]\frac{n}{|D'|} + \frac{1}{2}\sum_{\ell \in [k_i]} \bar{v}_{\ell}^i[j] +fn+ \mu n.
\end{align*}
Similarly, by the choice of $a_{\ell}^{\mathcal{D''}}$ for $\ell \in [k_i]$ and as $n > |D'|$,
\begin{align*}
&\operatorname{h}_r(\mathcal{D})[j] - \operatorname{h}_r(\mathcal{D''})[j] 
\geq \bar{v}_0^i[j]\Big(\frac{n}{|D'|} - 1\Big) + \sum_{\ell \in [k_i]}\bar{v}_{\ell}^i[j]\Big(\frac{a_{\ell}^{\mathcal{D'}} n}{|D'|} - a_{\ell}^{\mathcal{D''}}\Big) - fn - \mu n\\
&\geq -\bar{v}_0^i[j]\frac{n}{|D'|} + \sum_{\ell \in [k_i]}\bar{v}_{\ell}^i[j]\Big(\frac{a_{\ell}^{\mathcal{D'}} n}{|D'|} - \Big(\frac{a_{\ell}^{\mathcal{D'}}n}{|D'|} + \frac{1}{2}\Big)\Big)  - fn - \mu n\\
&=-\bar{v}_0^i[j]\frac{n}{|D'|} - \frac{1}{2}\sum_{\ell \in [k_i]} \bar{v}_{\ell}^i[j] -f n- \mu n.
\end{align*}
Therefore,
\begin{align*}
&|\operatorname{h}_r(\mathcal{D})[j] - \operatorname{h}_r(\mathcal{D''})[j]| \leq \bar{v}_0^i[j]\frac{n}{|D'|} + \frac{1}{2}\sum_{\ell \in [k_i]} \bar{v}_{\ell}^i[j] +fn + \mu n\\
&\leq \frac{n}{|D'|}\sum_{0 \leq \ell \leq k_i} \bar{v}_{\ell}^i[j]+f n  + \mu n
\leq \frac{nkv}{|D'|} + fn + \mu n \\
 &= n\Big(\frac{kv}{|D'|} + \frac{\lambda}{3c} + \frac{\lambda}{6c} \Big)
 \leq n\Big(\frac{\lambda}{18c} + \frac{\lambda}{3c} + \frac{\lambda}{6c}\Big)
  =\frac{5 \lambda n}{9c}
\end{align*}
by the choice of $f$ and $\mu$ and as \[|D'| \geq n_{\text{min}} = \frac{3c(kv)^2}{f- \mu} =\frac{18(ckv)^2}{\lambda} \geq \frac{18ckv}{ \lambda}.\] To apply Lemma \ref{lemma: locality} we need to show that $\|\operatorname{h}_r(\mathcal{D}) - \operatorname{h}_r(\mathcal{D''})\|_1 \leq \lambda \min\{n,|D''|\}$. If $|\operatorname{h}_r(\mathcal{D})[j] - \operatorname{h}_r(\mathcal{D''})[j]| \leq \frac{\lambda}{c} \min\{n,|D''|\}$ then $\|\operatorname{h}_r(\mathcal{D}) - \operatorname{h}_r(\mathcal{D''})\|_1 \leq \lambda \min\{n,|D''|\}$. Clearly, $\frac{5 \lambda n}{9c} < \frac{\lambda n}{c}$. We also have
\begin{align*}
&|D''| = \|\bar{v}_0^i\|_1 + \sum_{\ell \in [k_i]}a_{\ell}^{\mathcal{D''}} \|\bar{v}_{\ell}^i\|_1  
\geq \|\bar{v}_0^i\|_1 + \sum_{\ell \in [k_i]}\Big(\frac{a_{\ell}^{\mathcal{D'}}n}{|D'|}-\frac{1}{2}\Big) \|\bar{v}_{\ell}^i\|_1  \\
&=\|\bar{v}_0^i\|_1 -\frac{1}{2} \sum_{\ell \in [k_i]}\|\bar{v}_{\ell}^i\|_1 + \frac{n}{|D'|}\sum_{\ell \in [k_i]}a_{\ell}^{\mathcal{D'}}\|\bar{v}_{\ell}^i\|_1
\geq -kv +\frac{n}{|D'|}(|D'| -\|\bar{v}_0^i\|_1) \\
&\geq - \frac{n}{18}+\frac{17}{18}n 
\geq \frac{5n}{9}
\end{align*}
as \[|D'| \geq \frac{18ckv}{ \lambda}  \geq 18v \geq 18\|\bar{v}_0^i\|_1 \text{ and } kv \leq \frac{(ckv)^2}{\lambda} = \frac{n_{\text{min}}}{18} \leq \frac{n}{18}.\] Therefore, $\frac{5 \lambda n}{9c} \leq \frac{\lambda |D''|}{c}$ and hence $\|\operatorname{h}_r(\mathcal{D}) - \operatorname{h}_r(\mathcal{D''})\|_1 \leq \lambda \min\{n,|D''|\}$. By Lemma \ref{lemma: locality}, $\mathcal{D}$ is $\epsilon$-close to $\mathbf{P}$.
 \end{claimproof}

Claim~\ref{claim: contradiction} gives us a contradiction and therefore for every $\mathcal{D'} \in \mathbf{P}$ such that $n_{\text{min}} \leq |D'| \leq n_{\text{max}}$, we have $\|\operatorname{dv}_r(\mathcal{D}) -\operatorname{dv}_r(\mathcal{D'})\|_1 > f + \mu$ as required.
\end{proof}

As mentioned in the introduction, Alon~\cite[Proposition 19.10]{lovasz2012large} proved that on bounded degree graphs, 
for any graph $\mathcal{G}$, radius $r$ and $\epsilon>0$ there always exists a graph $\mathcal{H}$ whose size is 
independent of $|V(\mathcal{G})|$ and whose $r$-neighbourhood distribution vector satisfies 
$\|\operatorname{dv}_r(\mathcal{G})-\operatorname{dv}_r(\mathcal{H})\|_1\leq \epsilon$.
However, the proof is only existential and does not provide an explicit bound on the 
size of $\mathcal{H}$.
As a corollary to the proof of Theorem~\ref{thrm: small dbs}, we immediately obtain explicit bounds for classes of graphs and relational databases of bounded degree whose histogram vectors form a semilinear set.

\begin{corollary}\label{cor:alon}
Let $\epsilon \in (0,1]$, $r \in \mathbb{N}$ and $\mathcal{D}$ be a $\sigma$-db that belongs to a class of $\sigma$-dbs $\mathbf{C}$ such that the set $\h_r(\mathbf{C})$ is semilinear, 
	i.e. $\operatorname{h}_r(\mathbf{C}) = M_1 \cup M_2 \cup \dots \cup M_m$ where $m \in \mathbb{N}$ and for each $i \in [m]$, $M_i = \{ \bar{v}_0^i + a_1 \bar{v}_1^i + \dots + a_{k_i} \bar{v}_{k_i}^i \mid a_1,\dots,a_{k_i} \in \mathbb{N} \}$ 
	is a linear set where $\bar{v}_0^i,\dots,\bar{v}_{k_i}^i \in \mathbb{N}^{\operatorname{c}(r)}$. 
	Then there exists a $\sigma$-db $\mathcal{D}_0$ such that 
	\[\|\operatorname{dv}_r(\mathcal{D}) -\operatorname{dv}_r(\mathcal{D}_0)\|_1 \leq \epsilon \text{ and } |D_0| \leq kv\Big(\frac{3ckv}{\epsilon} + 2 \Big)\] 
	where $c:=\operatorname{c}(r)$,  $k:=\max_{i \in [m]}k_i + 1$ and $v:=\max_{i \in [m]} \Big(\max_{j \in [0,k_i]}\|\bar{v}_j^i\|_1\Big)$.
\end{corollary}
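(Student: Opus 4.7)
The plan is to directly recycle Part~1 of the proof of Theorem~\ref{thrm: small dbs}. The crucial observation is that hyperfiniteness and Lemma~\ref{lemma: locality} are invoked only in Part~2 (to deduce $\epsilon$-closeness in the \BDRDnew model from small histogram distance); Part~1 uses nothing beyond the semilinear decomposition of $\operatorname{h}_r(\mathbf{P})$. The same argument therefore applies verbatim with $\mathbf{P}$ replaced by $\mathbf{C}$, and with the parameter $n_0$ tuned so that the resulting bound on $\|\operatorname{dv}_r(\mathcal{D}) - \operatorname{dv}_r(\mathcal{D}_0)\|_1$ becomes $\epsilon$ rather than $f-\mu$.

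Concretely, I would first dispose of the trivial case $|D| \leq n_{\max} := kv\bigl(3ckv/\epsilon + 2\bigr)$ by simply taking $\mathcal{D}_0 := \mathcal{D}$, which gives distance zero. Otherwise, set $n_0 := kv\bigl(3ckv/\epsilon + 1\bigr)$, so that $n_{\max} = n_0 + kv$ and $n_{\min} := n_0 - kv > 0$. Since $\operatorname{h}_r(\mathcal{D}) \in \operatorname{h}_r(\mathbf{C})$, there is some $i \in [m]$ and natural numbers $a_1^{\mathcal{D}}, \dots, a_{k_i}^{\mathcal{D}}$ with $\operatorname{h}_r(\mathcal{D}) = \bar{v}_0^i + \sum_{\ell \in [k_i]} a_\ell^{\mathcal{D}} \bar{v}_\ell^i$. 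Define $a_\ell^{\mathcal{D}_0}$ to be the nearest integer to $a_\ell^{\mathcal{D}} n_0 / n$, and let $\mathcal{D}_0$ be any $\sigma$-db realising $\bar{v}_0^i + \sum_{\ell \in [k_i]} a_\ell^{\mathcal{D}_0} \bar{v}_\ell^i \in M_i \subseteq \operatorname{h}_r(\mathbf{C})$; such a database exists by definition of $\operatorname{h}_r(\mathbf{C})$.

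The size bound $|D_0| \leq n_{\max}$ follows from the computation of Claim~\ref{claim:upper bound on size} in Theorem~\ref{thrm: small dbs} applied to the new choice of $n_0$, and the $\ell_1$-distance bound is obtained by copying the chain of inequalities in the third claim of Part~1 of that proof. Specifically, the same telescoping argument yields, for every $j \in [c]$,
\[
|\operatorname{dv}_r(\mathcal{D})[j] - \operatorname{dv}_r(\mathcal{D}_0)[j]| \leq \frac{kv(2kv+1)}{n_0 - kv} \leq \frac{3(kv)^2}{n_0 - kv} = \frac{\epsilon}{c},
\]
by the choice of $n_0$, and summing over the $c$ coordinates gives $\|\operatorname{dv}_r(\mathcal{D}) - \operatorname{dv}_r(\mathcal{D}_0)\|_1 \leq \epsilon$. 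Since none of these inequalities referred to a property $\mathbf{P}$ or to hyperfiniteness, the entire step transfers unchanged.

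There is essentially no genuine obstacle here: the only point to verify carefully is that the elementary estimates controlling $\operatorname{dv}_r(\mathcal{D})[j] - \operatorname{dv}_r(\mathcal{D}_0)[j]$ from above and below remain valid when the target size $n_0$ is chosen as a function of $\epsilon$ rather than of the auxiliary constant $f-\mu$, and this is immediate by substitution. The final size bound $|D_0| \leq kv(3ckv/\epsilon + 2)$ is then read off directly from $n_{\max} = n_0 + kv$.
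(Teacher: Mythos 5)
Your proposal is correct and follows exactly the route the paper intends: the corollary is presented there as an immediate by-product of Part~1 of the proof of Theorem~\ref{thrm: small dbs}, re-running the rounding construction with $n_0 := kv(3ckv/\epsilon + 1)$ so that the per-coordinate bound $3(kv)^2/(n_0-kv)$ equals $\epsilon/c$, and your handling of the small case $|D|\leq n_{\max}$ by taking $\mathcal{D}_0=\mathcal{D}$ is the right trivial disposal. The only (harmless) point left implicit is the normalisation that generators $\bar v_j^i$ with $\|\bar v_j^i\|_1=0$ may be discarded so that $a_\ell^{\mathcal{D}}\leq n$ holds, exactly as in the theorem's proof.
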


Our aim is to construct constant time testers for hyperfinite properties whose set of $r$-histograms are semilinear. If we can approximate the $r$-neighbourhood distribution of a $\sigma$-db then by Theorem \ref{thrm: small dbs} we only need to check whether this distribution is close or not to the $r$-neighbourhood distribution of some small constant size $\sigma$-db. We let $\operatorname{EstimateFrequencies}_{r,s}$ be the algorithm that, given oracle access to an input $\sigma$-db $\mathcal{D}$, samples $s$ many elements uniformly and independently from $D$ and computes their $r$-type. The algorithm then returns the $r$-neighbourhood distribution vector of the sample.

\begin{lemma}[\cite{adler2018property}]\label{lemma: neighbourhood distribution}
Let $\mathcal{D} \in \mathbf{C}_d$ be a $\sigma$-db on $n$ elements, $\mu\in (0,1)$ and $r \in \mathbb{N}$. If $s \geq \operatorname{c}(r)^2/\mu^2 \cdot \operatorname{ln}(20\operatorname{c}(r))$, with probability at least ${9}/{10}$ the vector $\bar{v}$ returned by the algorithm $\operatorname{EstimateFrequencies}_{r,s}$ on input $\mathcal{D}$ satisfies $\| \bar{v} - \operatorname{dv}_{r}(\mathcal{D}) \|_1 \leq \mu$.
\end{lemma}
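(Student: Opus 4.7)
\textbf{Proof plan for Lemma~\ref{lemma: neighbourhood distribution}.} The plan is a standard empirical concentration argument: for each coordinate of the histogram we apply a Hoeffding-type tail bound, and then a union bound over the $\operatorname{c}(r)$ coordinates.

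More concretely, enumerate the $r$-types as $\tau_1,\dots,\tau_{\operatorname{c}(r)}$ and let $p_i := \operatorname{dv}_r(\mathcal{D})[i]$ be the true fraction of elements of $\mathcal{D}$ whose $r$-type is $\tau_i$. For each sampled element $a_j$ (with $j \in [s]$) drawn uniformly and independently from $D$, let $X_j^i$ be the indicator that $a_j$ has $r$-type $\tau_i$. Then $X_1^i,\dots,X_s^i$ are i.\,i.\,d.\ Bernoulli variables with mean $p_i$, each bounded in $[0,1]$, and the $i$-th coordinate of the output vector is the empirical mean $\bar v[i] = \frac{1}{s}\sum_{j=1}^s X_j^i$.

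Since we want $\|\bar v - \operatorname{dv}_r(\mathcal{D})\|_1 = \sum_{i=1}^{\operatorname{c}(r)} |\bar v[i]-p_i| \leq \mu$, it suffices to guarantee $|\bar v[i] - p_i| \leq \mu/\operatorname{c}(r)$ for every $i$. Applying Hoeffding's inequality to each coordinate gives
\[\Pr\bigl[\,|\bar v[i] - p_i| > \mu/\operatorname{c}(r)\,\bigr] \leq 2\exp\!\bigl(-2 s\mu^2/\operatorname{c}(r)^2\bigr).\]
A union bound over the $\operatorname{c}(r)$ coordinates then yields
\[\Pr\bigl[\,\|\bar v - \operatorname{dv}_r(\mathcal{D})\|_1 > \mu\,\bigr] \leq 2\operatorname{c}(r)\exp\!\bigl(-2s\mu^2/\operatorname{c}(r)^2\bigr).\]

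It remains to verify that for the stated choice $s \geq \operatorname{c}(r)^2/\mu^2 \cdot \ln(20\operatorname{c}(r))$, the right-hand side is at most $1/10$. Rearranging, this reduces to checking $2s\mu^2/\operatorname{c}(r)^2 \geq \ln(20\operatorname{c}(r))$, which follows (with room to spare, since the hypothesis even gives a factor of $2$) from the assumed lower bound on $s$. No obstacle is anticipated here; the only modest subtlety is the choice of per-coordinate threshold $\mu/\operatorname{c}(r)$, chosen precisely so that the $L_1$ sum is controlled by $\mu$, which is what forces the $\operatorname{c}(r)^2$ factor in the sample-size bound.
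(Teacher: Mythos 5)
Your argument is correct: the per-coordinate Hoeffding bound with threshold $\mu/\operatorname{c}(r)$, followed by a union bound over the $\operatorname{c}(r)$ types, gives failure probability at most $2\operatorname{c}(r)\exp(-2s\mu^2/\operatorname{c}(r)^2) \leq 1/10$ for the stated $s$, and this is exactly the standard Chernoff--Hoeffding argument used in the cited source (the paper itself imports Lemma~\ref{lemma: neighbourhood distribution} from \cite{adler2018property} without reproving it). No gaps.
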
 

\begin{theorem}\label{thm: constant time tester}
Let $\mathbf{C}$ be closed under removing tuples and let $\mathbf{P} \subseteq \mathbf{C}$ be a property that is hyperfinite on $\mathbf{C}$. If for each $r \in \mathbb{N}$ the set $\operatorname{h}_r(\mathbf{P})$ is semilinear, then $\mathbf{P}$ is uniformly testable on $\mathbf{C}$ in constant time in the \BDRDnew model.
\end{theorem}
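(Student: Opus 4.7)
My plan is to combine the structural result of Theorem~\ref{thrm: small dbs} with the sampling algorithm of Lemma~\ref{lemma: neighbourhood distribution} to reduce testing $\mathbf{P}$ to a constant-size check on $r$-neighbourhood distributions. Fix $\epsilon\in(0,1]$ and let $r$, $n_{\text{min}}$, $n_{\text{max}}$, $f$, $\mu$ be the constants produced by Theorem~\ref{thrm: small dbs}. The tester will proceed as follows on a database $\mathcal{D}$ with $n := |D|$ given as auxiliary input. If $n \leq n_{\text{max}}$, it queries the entire database (only $O(n_{\text{max}}\cdot d)$ queries, a constant) and decides membership in $\mathbf{P}$ directly. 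Otherwise, it invokes $\operatorname{EstimateFrequencies}_{r,s}$ with $s := \lceil \operatorname{c}(r)^2/\mu^2 \cdot \ln(20\operatorname{c}(r)) \rceil$ to obtain a vector $\bar{v}$, and accepts if and only if there exists some $\mathcal{D}' \in \mathbf{P}$ with $n_{\text{min}} \leq |D'| \leq n_{\text{max}}$ and $\|\bar{v} - \operatorname{dv}_r(\mathcal{D}')\|_1 \leq f$.

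Correctness will follow from a triangle inequality argument. By Lemma~\ref{lemma: neighbourhood distribution}, with probability at least $9/10$ the returned vector satisfies $\|\bar{v} - \operatorname{dv}_r(\mathcal{D})\|_1 \leq \mu$; assume this good event holds. If $\mathcal{D}\in\mathbf{P}$ and $n>n_{\text{max}}$, Theorem~\ref{thrm: small dbs}(1) produces some $\mathcal{D}'\in\mathbf{P}$ of size in $[n_{\text{min}},n_{\text{max}}]$ with $\|\operatorname{dv}_r(\mathcal{D})-\operatorname{dv}_r(\mathcal{D}')\|_1\leq f-\mu$, so $\|\bar{v}-\operatorname{dv}_r(\mathcal{D}')\|_1 \leq \mu + (f-\mu) = f$ and the tester accepts. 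Conversely, if $\mathcal{D}$ is $\epsilon$-far from $\mathbf{P}$, Theorem~\ref{thrm: small dbs}(2) ensures that every such candidate $\mathcal{D}'$ satisfies $\|\operatorname{dv}_r(\mathcal{D})-\operatorname{dv}_r(\mathcal{D}')\|_1 > f+\mu$, whence $\|\bar{v}-\operatorname{dv}_r(\mathcal{D}')\|_1 > f$ and the tester rejects. Since $9/10 > 2/3$, the error bound required of an $\epsilon$-tester is met.

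The query complexity is immediate: $\operatorname{EstimateFrequencies}_{r,s}$ makes $O(s\cdot d^{r+1})$ queries, all depending only on $\epsilon$, and the full-check branch uses at most $O(n_{\text{max}}\cdot d)$ queries, again a constant. The main obstacle I foresee is justifying that the post-processing step runs in constant time, as one might worry about how to produce and verify the witnesses $\mathcal{D}'$. Here the key observation is that $n_{\text{max}}$ depends only on $\epsilon$, so up to isomorphism there are only finitely many $\sigma$-dbs of size at most $n_{\text{max}}$, and the tester (viewed as a uniform family indexed by $\epsilon$, whose description may depend on $\mathbf{P}$) can precompute, offline and independently of the input, the finite list $L_\epsilon := \{\operatorname{dv}_r(\mathcal{D}') \mid \mathcal{D}' \in \mathbf{P},\ n_{\text{min}} \leq |D'| \leq n_{\text{max}}\}$. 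The third step then reduces to at most $|L_\epsilon|$ many $\ell_1$-comparisons between vectors of constant dimension $\operatorname{c}(r)$, all performed in constant time in the RAM model.
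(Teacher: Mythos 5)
Your proposal is correct and follows essentially the same route as the paper's own proof: the identical three-step tester (full check for small $n$, run $\operatorname{EstimateFrequencies}_{r,s}$, compare $\bar{v}$ against the distributions of constant-size members of $\mathbf{P}$) with the same triangle-inequality correctness argument via Theorem~\ref{thrm: small dbs} and Lemma~\ref{lemma: neighbourhood distribution}. Your explicit remark that the finite list of candidate distribution vectors can be precomputed offline is a nice clarification of a point the paper leaves implicit.
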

\begin{proof}
Let $\epsilon \in (0,1]$. Let $r:=r(\epsilon)$ be as in Lemma \ref{lemma: locality}, let $n_{\text{min}}:=n_{\text{min}}(\epsilon)$, $n_{\text{max}}:=n_{\text{max}}(\epsilon)$, $f:=f(\epsilon)$ and $\mu:=\mu(\epsilon)$ be as in Theorem \ref{thrm: small dbs} and let $s = \operatorname{c}(r)^2/\mu^2 \cdot \operatorname{ln}(20\operatorname{c}(r))$. Assume that the set $\operatorname{h}_r(\mathbf{P})$ is semilinear. Given oracle access to a $\sigma$-db $\mathcal{D} \in \mathbf{C}$ and $|D|=n$ as an input, the $\epsilon$-tester proceeds as follows:
\begin{enumerate}
\item If $n \leq n_{\text{max}}$, do a full check of $\mathcal{D}$ and decide if $\mathcal{D} \in \mathbf{P}$. 
\item Run $\operatorname{EstimateFrequencies}_{r,s}$ and let $\bar{v}$ be the resulting vector.
\item If there exists a $\mathcal{D'} \in \mathbf{P}$ where $n_{\text{min}} \leq |D'| \leq n_{\text{max}}$ and $\|\bar{v} -\operatorname{dv}_r(\mathcal{D'})\|_1 \leq f$ then accept otherwise reject.
\end{enumerate}
The running time and query complexity of the above tester is constant as $n_{\text{max}}$ is a constant (it only depends on $\mathbf{P}$, $d$ and $\epsilon$) and $\operatorname{EstimateFrequencies}_{r,s}$ runs in constant time and makes a constant number of queries.

For correctness, first assume $\mathcal{D} \in \mathbf{P}$. By Theorem \ref{thrm: small dbs} there exists a $\sigma$-db $\mathcal{D'} \in \mathbf{P}$ such that $n_{\text{min}} \leq |D'| \leq n_{\text{max}}$ and $\|\operatorname{dv}_r(\mathcal{D})-\operatorname{dv}_r(\mathcal{D'})\|_1 \leq f - \mu$. By Lemma \ref{lemma: neighbourhood distribution} with probability at least $9/10$, $\|\bar{v} -\operatorname{dv}_r(\mathcal{D})\|_1 \leq \mu$ and therefore $\|\bar{v} -\operatorname{dv}_r(\mathcal{D'})\|_1 \leq f$. Hence with probability at least $9/10$ the tester will accept.

Now assume $\mathcal{D}$ is $\epsilon$-far from $\mathbf{P}$. By Theorem \ref{thrm: small dbs} for every $\mathcal{D'} \in \mathbf{P}$ with $n_{\text{min}} \leq |D'| \leq n_{\text{max}}$, we have $\|\operatorname{dv}_r(\mathcal{D}) -\operatorname{dv}_r(\mathcal{D'})\|_1 > f + \mu$. By Lemma \ref{lemma: neighbourhood distribution} with probability at least $9/10$, $\|\bar{v} -\operatorname{dv}_r(\mathcal{D})\|_1 \leq \mu$ and therefore for every $\mathcal{D'} \in \mathbf{P}$ with $n_{\text{min}} \leq |D'| \leq n_{\text{max}}$, $\|\bar{v} -\operatorname{dv}_r(\mathcal{D'})\|_1 > f$. Hence with probability at least $9/10$ the tester will reject.
\end{proof}

Combining Theorem \ref{thm: constant time tester} and Lemma \ref{lemma:CMSO semilinear} and the fact that $\mathbf{C}_d^t$ is hyperfinite~\cite{HassidimKNO09,AlonST90} (and so any property is hyperfinite on $\mathbf{C}_d^t$) we obtain the following as a corollary.

\begin{theorem}\label{thm: CMSO testability}
Every property $\mathbf{P}$ definable by a CMSO sentence on $\mathbf{C}_d^t$ is uniformly testable on $\mathbf{C}_d^t$ with constant time complexity in the \BDRDnew model.
\end{theorem}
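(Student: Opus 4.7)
The plan is to derive Theorem \ref{thm: CMSO testability} as an immediate corollary of Theorem \ref{thm: constant time tester}, instantiated with the input class $\mathbf{C} := \mathbf{C}_d^t$ and the property $\mathbf{P}$. To invoke that theorem I need to verify three conditions: (a) $\mathbf{C}_d^t$ is closed under removing tuples, (b) $\mathbf{P}$ is hyperfinite on $\mathbf{C}_d^t$, and (c) for every $r \in \mathbb{N}$ the set $\operatorname{h}_r(\mathbf{P})$ is semilinear. Once all three are established, Theorem \ref{thm: constant time tester} directly produces a uniform constant-time $\epsilon$-tester for $\mathbf{P}$ on $\mathbf{C}_d^t$ in the \BDRDnew model.

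Condition (a) is routine: deleting a tuple from a $\sigma$-db can only decrease the degree of each element and can only remove edges from the Gaifman graph, so both the degree bound $d$ and the tree-width bound $t$ are preserved, hence $\mathbf{C}_d^t$ is closed under tuple removal. Condition (c) is exactly Lemma \ref{lemma:CMSO semilinear}, which asserts that for each $r$, every CMSO-definable property on $\mathbf{C}_d^t$ has a semilinear $r$-histogram set; this is a consequence of Fischer's characterization of many-sorted spectra of CMSO sentences, so I do not need to redo any work here. Condition (b) reduces to showing that $\mathbf{C}_d^t$ itself is hyperfinite on $\mathbf{C}_d^t$, because any subclass of a hyperfinite class is hyperfinite in it. This hyperfiniteness is a known result: by the balanced separator theorem for bounded tree-width graphs of Alon, Seymour and Thomas (applied to the Gaifman graph) one can recursively split the structure using small separators, and translating this separation back to tuple deletions yields an $(\epsilon, \rho(\epsilon))$-partition inside $\mathbf{C}_d^t$ for a suitable $\rho$.

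With (a), (b) and (c) in hand, Theorem \ref{thm: constant time tester} finishes the argument without any further work. There is no real obstacle in this proof: the substantive content was already absorbed into Theorem \ref{thrm: small dbs} (the existence of a small witness database with similar neighbourhood distribution) and into Lemma \ref{lemma:CMSO semilinear} (the semilinearity of CMSO histograms), together with the external hyperfiniteness result for $\mathbf{C}_d^t$. The only mild subtlety worth double-checking is that the cited graph-theoretic hyperfiniteness transfers cleanly to relational databases; this is safe because tree-width and degree of a database are defined via its Gaifman graph, and removing a constant fraction of Gaifman edges corresponds to removing at most a constant-times-$d$ fraction of tuples, which still fits the hyperfiniteness definition used in the paper.
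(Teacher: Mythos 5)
Your proposal is correct and follows essentially the same route as the paper, which likewise obtains the theorem as a corollary of Theorem~\ref{thm: constant time tester} combined with Lemma~\ref{lemma:CMSO semilinear} and the known hyperfiniteness of $\mathbf{C}_d^t$ (citing Hassidim et al.\ and Alon--Seymour--Thomas). Your additional checks of closure under tuple removal and of the transfer of hyperfiniteness from Gaifman graphs to databases are sensible but add nothing beyond what the paper implicitly relies on.
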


\bibliography{constant_time_tester}
\newpage
\appendix
\section{Proofs of Section \ref{sec: the model}}
\begin{proof}[Proof of Lemma \ref{lemma: comparing models 1}]
Let $\pi$ be an $\epsilon$-tester, that runs in time $f(n)$, for $\mathbf{P}$ on $\mathbf{C}$ in the \BDRD model. We claim that $\pi$ is also an $\epsilon$-tester for $\mathbf{P}$ on $\mathbf{C}$ in the \BDRDnew model. Let $\mathcal{D} \in \mathbf{C}$ be the input $\sigma$-db. If $\mathcal{D} \in \mathbf{P}$ then $\pi$ will accept with probability at least $2/3$. If $\mathcal{D}$ is $\epsilon$-far from $\mathbf{P}$ in the \BDRDnew model then it must also be $\epsilon$-far from $\mathbf{P}$ in the \BDRD model and therefore $\pi$ will reject with probability at least $2/3$. Hence $\pi$ is an $\epsilon$-tester for $\mathbf{P}$ on $\mathbf{C}$ in the \BDRDnew model.
\end{proof}

\section{Proofs of Section \ref{sec: main results}}
\begin{claimproof}[Proof of Claim~\ref{claim: lower bound on size}]
By the choice of $a_j^{\mathcal{D'}}$ for $j \in [k_i]$,
\begin{align*}
&|D'| = \|\bar{v}_0^i\|_1+ \sum_{j \in [k_i]} a_j^{\mathcal{D'}} \|\bar{v}_j^i\|_1  
\geq \|\bar{v}_0^i\|_1+ \sum_{j \in [k_i]}\Big(\frac{a_j^{\mathcal{D}}n_0}{n} - \frac{1}{2}\Big) \|\bar{v}_j^i\|_1 \\
&= \|\bar{v}_0^i\|_1 - \frac{1}{2} \sum_{j \in [k_i]}\|\bar{v}_j^i\|_1 + \frac{n_0}{n}  \sum_{j \in [k_i]}a_j^{\mathcal{D}}\|\bar{v}_j^i\|_1
= \|\bar{v}_0^i\|_1 - \frac{1}{2} \sum_{j \in [k_i]}\|\bar{v}_j^i\|_1 + n_0-\frac{n_0\|\bar{v}_0^i\|_1}{n} \\
&\geq \|\bar{v}_0^i\|_1 - \frac{1}{2} \sum_{j \in [k_i]}\|\bar{v}_j^i\|_1 + n_0-\|\bar{v}_0^i\|_1
\geq -kv +  n_0 = n_{\text{min}},
\end{align*}
as $\sum_{j \in [k_i]}a_j^{\mathcal{D}}\|\bar{v}_j^i\|_1 = n - \|\bar{v}_0^i\|_1$ and $n > n_{\text{max}} \geq n_0$.
\end{claimproof}

\begin{claimproof}[Proof of Claim \ref{claim:upper bound on size}]
By the choice of $a_j^{\mathcal{D'}}$ for $j \in [k_i]$,
\begin{align*}
&|D'| = \|\bar{v}_0^i\|_1+  \sum_{j \in [k_i]}a_j^{\mathcal{D'}} \|\bar{v}_j^i\|_1
\leq \|\bar{v}_0^i\|_1+  \sum_{j \in [k_i]}\Big(\frac{a_j^{\mathcal{D}}n_0}{n} + \frac{1}{2}\Big) \|\bar{v}_j^i\|_1 \\
&= \|\bar{v}_0^i\|_1 + \frac{1}{2} \sum_{j \in [k_i]}\|\bar{v}_j^i\|_1 + n_0\Big(1-\frac{\|\bar{v}_0^i\|_1}{n}\Big) 
\leq   \sum_{0 \leq j \leq k_i}\|\bar{v}_j^i\|_1 + n_0 
\leq kv +  n_0 = n_{\text{max}},
\end{align*}
as $\sum_{j \in [k_i]}a_j^{\mathcal{D}}\|\bar{v}_j^i\|_1 = n - \|\bar{v}_0^i\|_1$.
\end{claimproof}

\end{document}